\renewcommand{\cite}[1]{\citep{#1}}
\newcommand{\shortcite}[1]{\citeyearpar{#1}}
\newcommand{\citeassubject}[1]{\citet{#1}}
\newtheorem{definition}{Definition}
\newtheorem{theorem}{Theorem}
\newtheorem{lemma}{Lemma}
\newtheorem{example}{Example}
\newcommand{\report}[1]{{\mathbf{\hat{p}}_{#1}}}
\newcommand{\outcome}[1]{{\mathbf{\tilde{x}}_{#1}}}
\newcommand{\bp}{\mathbf{p}}
\newcommand{\bw}{\mathbf{w}}
\newcommand{\hp}{\mathbf{\hat{p}}}
\newcommand{\tbX}{\tilde{\mathbf{X}}}
\newcommand{\tX}{\tilde{X}}
\newcommand{\tbx}{\tilde{\mathbf{x}}}
\newcommand{\tx}{\tilde{x}}
\newcommand{\com}[1]{\textbf{\color{red}(COMMENT: #1)}} 
\newcommand{\clar}[1]{\textbf{\color{green}(NEED CLARIFICATION: #1)}}
\newcommand{\response}[1]{\textbf{\color{magenta}(RESPONSE: #1)}} 
\newcommand{\com}[1]{}
\newcommand{\clar}[1]{}
\newcommand{\response}[1]{}
\newcommand{\RNum}[1]{\uppercase\expandafter{\romannumeral #1\relax}}
\newcommand{\red}[1]{{\color{black}#1}}
\newcommand{\blue}[1]{{\color{black}#1}}
\title{Randomized Wagering Mechanisms\footnote{This research is based upon work supported in part by the Office of the Director of National Intelligence (ODNI), Intelligence Advanced Research Projects Activity (IARPA), via 2017-17061500006. The views and
conclusions contained herein are those of the authors and should not be interpreted as necessarily representing the official policies, either expressed or implied, of ODNI, IARPA, or the U.S. Government. The U.S. Government is authorized to reproduce and distribute reprints for governmental purposes notwithstanding any copyright annotation therein.}}
\author{
  Yiling Chen\\
  Harvard University\\
  {yiling@seas.harvard.edu}
  \and
  Yang Liu\\
  UC Santa Cruz\\
  {yangliu@ucsc.edu}
  \and
  Juntao Wang\\
  Harvard University\\
  {juntaowang@g.harvard.edu}
}
\begin{document}

\maketitle
\begin{abstract}
Wagering mechanisms are one-shot betting mechanisms that elicit agents' predictions of an event. For deterministic wagering mechanisms, an existing impossibility result has shown incompatibility of some desirable theoretical properties. In particular, Pareto optimality (no profitable side bet before allocation) can not be achieved together with weak incentive compatibility, weak budget balance and individual rationality. In this paper, we expand the design space of wagering mechanisms to allow randomization and ask whether there are randomized wagering mechanisms that can achieve all previously considered desirable properties, including Pareto optimality. We answer this question positively with two classes of randomized wagering mechanisms: i) one simple randomized lottery-type implementation of existing deterministic wagering mechanisms, and ii) another family of randomized wagering mechanisms, named \emph{surrogate wagering} mechanisms, which are robust to noisy ground truth. Surrogate wagering mechanisms are inspired by an idea of learning with noisy labels \citep{natarajan2013learning} as well as a recent extension of this idea to the information elicitation without verification setting \citep{DBLP:journals/corr/abs-1802-09158}.  We show that a broad set of randomized wagering mechanisms satisfy all desirable theoretical properties. 

\end{abstract}

\section{Introduction}

Wagering mechanisms \cite{lambert2008self,lambert2015axiomatic,chen2014removing,freeman2017double,freeman2018axiomatic} are one-shot betting mechanisms that allow a principal to elicit participating agents' beliefs about an event of interest without paying out of pocket or incurring a risk. Compared with prediction-market type of dynamic elicitation mechanisms, one-shot wagering \red{is possibly} preferred due to its simplicity. It is particularly designed for agents with immutable beliefs who ``agree to disagree'' and who do not update their beliefs. In a wagering mechanism, each agent submits a prediction for the event and specifies a wager, which is the maximum amount of money that the agent is willing to lose. Then after the event outcome is revealed, the total wagered money will be redistributed among the participants. Researchers have developed wagering mechanisms with various theoretical properties. In particular, \red{\citeauthor{lambert2008self}~\shortcite{lambert2008self,lambert2015axiomatic}} proposed a class of weighted score wagering mechanisms (\textsf{WSWM}) that satisfy a set of desirable properties, including budget balance, individual rationality, incentive compatibility, sybilproofness, among others\red{.\footnote{Definitions of some properties can be found in Section \ref{sec:rwm}.}} \red{\citeassubject{chen2014removing}} later proposed a no-arbitrage wagering mechanism (\textsf{NAWM}) that removes opportunities for participating agents to risklessly profit. 

However, in both \textsf{WSWM} and \textsf{NAWM}, it has been observed that a participant only loses a very small fraction of his total wager even in the worst case. This seems to be undesirable in practice as it is against the \red{spirit} of betting and a wager effectively loses its meaning as a budget. \red{\citeassubject{freeman2017double}} first formalized this observation by indicating that these mechanisms are not Pareto optimal, where Pareto optimality requires that there is no profitable side bet among participants before the allocation of a wagering mechanism is realized. They also proved an impossibility result: Pareto optimality cannot be satisfied together with individual rationality, weak budget balance and weak incentive compatibility. A double clinching auction (\textsf{DCA}) wagering mechanism \citep{freeman2017double} was hence proposed to improve Pareto efficiency. The parimutuel consensus mechanism (\textsf{PCM}) has been shown to satisfy Pareto \red{optimality} \citep{freeman2018axiomatic}, but \red{violates} incentive compatibility.

This paper is another quest of wagering mechanisms with better theoretical properties. We expand the design space of wagering mechanisms to allow randomization on agent payoffs and ask whether we can achieve all aforementioned desirable properties, including Pareto optimality. We give a positive answer to this question: Our randomized wagering mechanisms are the first ones to achieve Pareto optimality along with other properties.

We first show that a simple randomized lottery-type implementation of existing wagering mechanisms (Lottery Wagering Mechanisms (\textsf{LWM})) satisfy all desirable properties. In \textsf{LWM}, instead of receiving re-allocated money from a deterministic wagering mechanism, each agent receives a number of lottery tickets proportional to his payoff in the deterministic wagering mechanism. Then, the agent with the winning lottery wins the total wager (collected from all participants).

We then design another family of randomized wagering mechanisms, the Surrogate Wagering Mechanisms (\textsf{SWM}), by bringing insights from learning with noisy data \citep{natarajan2013learning,scott2015rate} to wagering mechanism design. A \textsf{SWM} first generates a ``surrogate outcome" for each agent according to the true event outcome. An agent's reported prediction is then evaluated using his surrogate but biased outcome together with a bias removal procedure such that in expectation the agent receives a score as if his prediction is evaluated against the true event outcome. Despite being randomized, \textsf{SWM} preserve the incentive properties of a deterministic wagering mechanism. 
We show that \red{certain \textsf{SWM}} satisfy all desirable properties of a wagering mechanism. Notably, \textsf{SWM} are robust to situations where only a noisy copy of the event outcome is available - this property is due to the fact that we borrow the machinery from the literature of learning with noisy data in designing \textsf{SWM}. We believe that this is another unique contribution to the literature of wagering mechanism design.

The rest of this paper is organized as follows. We discuss relevant literature in the Section~\ref{sec:related}. Section \ref{sec:prelim} introduces some preliminaries. We define randomized wagering mechanisms as well as desirable theoretical properties for them in Section \ref{sec:rwm}. Section \ref{sec:lottery} presents a family of lottery-based wagering mechanisms. A family of surrogate wagering mechanisms are introduced in Section \ref{sec:swm}. We extend surrogate wagering mechanisms on \textsf{NAWM} and on the multi-outcome event settings in Section~\ref{sec:general}. Extensive simulations are presented in Section \ref{sec:simulation} to demonstrate the advantages of randomized wagering mechanisms. Section \ref{sec:conclude} concludes this paper. 

\section{Related works}
\label{sec:related}



The ability to elicit \emph{information}, in particular predictions and forecasts about future events, is crucial for many application settings and has been studied extensively in the literature. Proper scoring rules have been designed \cite{Brier:50,Jose:06,Matheson:76,Win:69,gneiting2007strictly} for this purpose, where each agent is rewarded by how well their reported forecasts predicted the true realized outcome (after the outcome is resolved). Later, competitive scoring rule \citep{Kilgour:04} and a parimutuel Kelly probability scoring rule \citep{Johnstone:2007} adapt proper scoring rules to group competitive betting. Both mechanisms are budget balanced so that the principal doesn't need to pay any participant. These spur the further development of the previously discussed wagering mechanisms \citep{lambert2008self,lambert2015axiomatic,chen2014removing,freeman2017double,freeman2018axiomatic} and the examination of their theoretical properties.

Our method used in lottery wagering mechanisms to transfer an arbitrary deterministic wagering mechanism into a randomized one, while maintaining the properties, is inspired by the method proposed in \citep{witkowski2018incentive}. The paper studies the incentive compatible forecasting competition and it transfers scores of multiple predictions into the odds of winning to maintain properties of the scoring rules. \citep{lambert2008self}  proposed a randomization method based on \textsf{WSWM} via randomly selecting strictly proper scoring rules and proper scoring rules with extreme values to increase the stake.  However, this method does not generalize to other deterministic wagering mechanisms.  \citep{cummings2016possibilities} proposed to apply differential privacy technology to randomize the payoff of wagering mechanisms in order to preserve the privacy of each agent's belief. However, their method does not maintain budget balance (in ex-post).

The idea of using randomization in wagering mechanism design is not entirely new, but not thoroughly studied. Both  \citep{lambert2008self, cummings2016possibilities} proposed certain types of randomized wagering mechanisms, but neither of the mechanisms satisfies Pareto optimality. The randomized wagering mechanisms first appeared in~\cite{lambert2008self}. There,  the randomization is restricted to randomly selecting different scoring rules used in \textsf{WSWM}. It introduced this randomization in order to alleviate the the problem that in \textsf{WSWM}, agents only lose a small fraction of their wagers regardless of the event outcome. However, even with this randomization,  an agent won't lose all his wager in the worst when the number of agents is finite. \citep{cummings2016possibilities} applied differential privacy technology to randomize the payoff of wagering mechanisms. Its goal is to preserve the privacy of agents' beliefs. 

Our specific ideas of adding randomness as in the lottery-like wagering mechanisms are inspired by recent works on forecasting competition \citep{witkowski2018incentive}. Our ideas of surrogate wagering mechanisms are inspired by surrogate scoring rules \citep{DBLP:journals/corr/abs-1802-09158},  and the literature on learning with noisy labels \citep{bylander1994learning,natarajan2013learning,scott2015rate}.

\section{Preliminaries}\label{sec:prelim}
In this section, we explain the  scenario where a wagering mechanism applies and formally introduce the deterministic wagering mechanisms.
Consider a scenario where a principal is interested in eliciting subjective beliefs from a set of agents $\mathcal{N}=\{1,2,...,N\}$ about a random variable (event) $X$, which takes a value (outcome) in set $\mathcal{X}=\{0, 1, ..., M-1\}, M\ge 2$. The belief of each agent $i$ is private, denoted as a vector of occurrence probabilities of each outcome $\bp_i=(p_{i}^j)_{j\in\mathcal{X}}\in\Delta^{M-1}$. Following the previous work on wagering mechanism, this paper continues to adopt an immutable belief model for agents. Unlike in a Bayesian model, agents with immutable beliefs do not update their beliefs. The immutable belief model and the Bayesian model are two extremes of agent modeling for information elicitation, with the reality lies in between and arguably closer to the immutable belief side as people do ``agree to disagree.'' Moreover, \citeassubject{lambert2015axiomatic} showed that while \textsf{WSWM} was designed for agents with immutable beliefs, it continued to perform well for Bayesian agents who have some innate utility for trading.

The principal uses a \emph{wagering mechanism} to elicit private beliefs of agents. In a wagering mechanism, each agent reports a probability vector $\hp_i\in\Delta^{M-1}$, capturing his belief, and wagers an amount of money $w_i\in\mathbb{R}_+$. Similar to~\citeassubject{lambert2008self}, we assume that wagers are exogenously determined for each agent and are not a strategic consideration. We use $\hp$ and $\bw$ to denote the reports and the wagers of all agents respectively, and use $\hp_{-i}$ and $\bw_{-i}$ to denote the reports and wagers of all agents other than agent $i$.
\red{In addition, }we use $W_{\mathcal{S}}$ to denote $\sum_{i\in\mathcal{S}}w_i$ for any set of agents $\mathcal{S}\subseteq{\mathcal{N}}$.
After an event outcome $x\in\mathcal{X}$ is realized, the wagering mechanism redistributes all the wagers collected from agents according to $\hp,\bw, x$. The net-payoff of agent $i$ is defined as the payoff or the money that agent $i$ receives from the redistribution minus his wager. A wagering mechanism defines a net-payoff function $\Pi_i(\hp; \bw; x)$ for each agent $i$ with wager constraint $\Pi_i(\hp; \bw; x)\ge -w_i$ and constraint $\Pi_i(\hp; \bw; x)=0$ whenever $w_i=0$. 
The two constraints ensure that no agent can lose more than his wager and no agent with zero wager can gain.

\subsection{Strictly proper scoring rules and weighted score wagering mechanisms}
\emph{Strictly proper scoring rules}~\citep{gneiting2007strictly} are scoring functions proposed and developed to truthfully elicit beliefs from risk-neutral agents. They are building blocks of many incentive compatible wagering mechanisms, such as \textsf{WSWM} and \textsf{NAWM}. 
\red{A strictly proper scoring rule rewards} a prediction $\hp_i$ by a score $s_x(\hp_i)$,  according to the realization $x$ of the random variable $X$. The scoring function $s_x(\cdot)$ is designed such that the expected payoff of truthful reporting is strictly larger than that of any other report, i.e, 
$\mathbb E_{X \sim \mathbf{p}_i}\bigl [s_{X}(\mathbf{p}_i)\bigr] > \mathbb E_{X \sim \mathbf{p}_i}\bigl [s_{X}(\report{i})\bigr],~\forall \report{i} \neq \mathbf{p}_i.$

There is a rich family of strictly proper scoring functions, including Brier scores (for binary outcome event, $s_x(\hat{p}_i)=1-(\hat{p}_i-x)^2$, where $\hat{p}_i$ is agent $i$'s report of $\mathbb P(X=1)$), logarithmic and spherical scoring functions. Strictly proper scoring rules are closed under positive affine transformations.

\emph{\textsf{WSWM}}~\citep{lambert2008self} rewards an agent according to his wager and the accuracy of his prediction relative to that of other agents' predictions. The net-payoff  of agent $i$ in \textsf{WSWM}, is formally defined as 
\begin{equation}
\label{eqn:ws}
\Pi^{\textsf{WS}}_i(\hp;\bw;x) = \frac{w_i W_{\mathcal{N}\backslash \{i\}}}{W_{\mathcal{N}}}\biggl( s_x(\report{i})  
- \sum_{j \in \mathcal{N}\backslash \{i\}} \frac{w_j}{W_{\mathcal{N}\backslash \{i\}}} s_x(\report{j})\biggr),
\end{equation}
where $s_x(\cdot)$ is any strictly proper scoring rule bounded within $[0,1]$. 
\textsf{WSWM} strictly encourages truthful reporting of predictions, because the net-payoff of agent $i$ is a strictly proper scoring rule of his prediction. \red{Meanwhile},   $\sum_{i\in\mathcal{N}}\Pi^{\textsf{WS}}_i$ is always zero by the form of the net-payoff formula, no matter what $s_x(\cdot)$ is. \red{This means that the} budget balance property of Eqn. (\ref{eqn:ws}) doesn't depend on the scoring rules. \red{Our proposed surrogate wagering mechanisms use the same general form of the net-payoff function (but a different scoring rule) to guarantee the ex-post budget balance.}

\section{Randomized wagering mechanisms} 
\label{sec:rwm}
We introduce randomized wagering mechanisms as extensions of deterministic wagering mechanisms. Similar to deterministic wagering mechanisms, the net-payoff of an agent in randomized wagering mechanisms depends on all agents' predictions $\hp$ and wagers $\bw$, as well as the realized outcome $x$. But different from deterministic wagering mechanisms, the net-payoffs are now random variables.
For notational simplicity, we now use $\Pi_i(\hp;\bw;x)$ to represent the random variable of agent $i$'s net-payoff in a randomized wagering mechanism. 
We use $\pi_i(\hp;\bw;x)$ to represent the realization of $\Pi_i(\hp;\bw;x)$. We use $\Pi_i$ and $\pi_i$ as abbreviations when $\hp;\bw;x$ are clear in the context.  We denote the maximum/minimum possible value of a random variable $X$ by $\overline{X}$/$\underline{X}$.
We denote the joint distribution of $\Pi_i(\hp;\bw;x),i\in\mathcal{N}$ by $\mathcal{D}(\hp;\bw;x)$ and the marginal distribution of $\Pi_i(\hp;\bw;x)$ by $\mathcal{D}_i(\hp;\bw;x)$. 

\begin{definition}
Given a set $\mathcal{N}$ of agents, reports $\hp$ and wagers $\bw$ of agents and the event outcome $x$, a \emph{randomized wagering mechanism} 
defines a joint distribution $\mathcal{D}(\hp;\bw;x)$, and pays  agent $i$ by a net-payoff $\Pi_i(\hp;\bw;x)$, where $\Pi_i(\hp;\bw;x), i\in\mathcal{N}$ are jointly drawn from $\mathcal{D}(\hp;\bw;x)$. Moreover, $\underline{\Pi}_i(\hp;\bw;x)\ge -w_i$ and $\Pi_i(\hp;\bw;x)=0$ whenever $w_i=0$.
\end{definition}
A deterministic wagering mechanism is a special case of randomized wagering mechanisms when $\mathcal{D}_i(\hp;\bw;x)$ is a point distribution for all agent $i\in\mathcal{N}$.

\subsection{Desirable properties}\label{sec:property}

In the literature, several desirable properties of wagering mechanisms have been proposed in the deterministic context. 
\red{\citeassubject{lambert2008self}} introduced (\textbf{a}) individual rationality, 
(\textbf{b}) incentive compatibility, 
(\textbf{c}) budget balance, 
(\textbf{d}) sybilproofness, 
(\textbf{e}) anonymity,and
(\textbf{f}) neutrality. \red{\citeassubject{chen2014removing}} introduced (\textbf{g}) no arbitrage. \red{\citeassubject{freeman2017double}}
introduced (\textbf{h}) Pareto optimality. 
We extend these properties to the randomized context. These new properties reduce to the properties defined in the literature for the special case of deterministic wagering mechanisms.

\textbf{(a) Individual rationality} requires that each agent has nothing to lose in expectation by participating.
\begin{definition}
A randomized wagering mechanism is \textbf{individually rational (IR)} if $\forall i,\bp_i,\bw$, and $\hp_{-i}$, there exists $\hp_i$ such that
$$\mathbb E_{X \sim \bp_i, \Pi_i\sim\mathcal{D}_i
(\hp_i,\hp_{-i};\mathbf{w};X)
}\left[ \Pi_i(\hp_i,\hp_{-i};\mathbf{w};X) \right] \ge 0.$$
\end{definition}
\com{ [Fixed:jtw] I didn't change them, but I thought the first input to $\Pi_i$ and $\mathcal{D}_i$ should be $\hp_i$ and not $\bp_i$.}

\textbf{(b) Incentive compatibility} requires that an agent's expected net-payoff is maximized when he reports honestly, regardless of other agents' reports and wagers. 
\begin{definition}
A randomized wagering mechanism is \textbf{weakly incentive compatible (WIC)} if $\forall i, \bp_i, \hp_i\ne \bp_i,\hp_{-i},\mathbf{w}:$
\begin{align*}
\mathbb E_{X \sim \bp_i, \Pi_i\sim\mathcal{D}_i
(\bp_i,\hp_{-i};\bw; X)
}\left[\Pi_i(\bp_i,\hp_{-i};\bw; X)\right] \\
\ge \mathbb
E_{X \sim \bp_i, \Pi_i\sim\mathcal{D}_i
(\hp;\bw; X)
}\left[\Pi_i(\hp;\bw; X)\right].
\end{align*}
A randomized wagering mechanism is \textbf{strictly incentive compatible (SIC)} if the inequality is strict.
\end{definition}

\textbf{(c) Ex-post budget balance} ensures that the principal does not need to subsidize the betting. 
\begin{definition}
A randomized wagering mechanism is \textbf{weakly ex-post budget-balanced (WEBB)} if $\forall \hp,\bw, x:
\sum_{i\in\mathcal{N}}\pi_i(\hp,\bw, x)\le 0$ for any realization  $(\pi_i)_{i\in\mathcal{N}}$ drawn from the joint distribution $\mathcal{D}(\hp,\bw, x)$.
A randomized wagering mechanism is \textbf{ex-post budget-balanced (EBB)} if the equality always holds.
\end{definition}
\textbf{(d) Sybilproofness} requires that no agent can increase its expected net-payoff by creating fake identities and splitting his wager, regardless of other agents' reports and wagers.

\blue{
\begin{definition}
Suppose agent $i$, instead of participating under one account with reported prediction $\hp_{i}$ and wager $w_i$, participates under $k>1$ sybil accounts, with predictions and wagers $\{\hp_{i_l},w_{i_l}\}_{l=1,\dots,k}$ such that $\hp_{i_l}=\hp_{i}, w_{i_l} \geq 0,  \forall l=1,\dots,k$ and $\sum_{l=1}^k w_{i_l}=w_i$. A randomized wagering mechanism is \textbf{sybilproof} if $\forall i, \hp, \bw$,and $x$, and for all sybil reports $\hp_{i_1}, ..., \hp_{i_k}$ and wagers $w_{i_1},...,w_{i_k}$, we have 
\begin{align*}
&\mathbb{E}_ {\Pi \sim\mathcal{D} (\hp;\bw; x)}[\Pi_i(\hp;\bw; x)] 
\\
&\ge\mathbb{E}_{\Pi'\sim\mathcal{D}(\hp';\bw';x)} \bigl[\sum_{l=1}^k\Pi_{i_l}(\hp';\bw';x)\bigr].\end{align*}
where $\hp$, $\bw$ and $\Pi$ are the reports, wagers and net-payoffs when agent $i$ participates under one account and $\hp'$, $\bw'$ and $\Pi'$ are the reports, wagers and net-payoffs when agent $i$ participates using $k$ sybils.
\end{definition}
}

\textbf{(e) Anonymity} requires that agents' identities do not affect their net-payoffs. Let $\sigma_\mathcal{N}$ be a permutation of the set of agents $\mathcal{N}$, and denote $\hp_{\sigma_\mathcal{N}}, \bw_{\sigma_\mathcal{N}}$ the reports and wagers of agents after applying the permutation respectively. Denote $\mathcal{D}_{\sigma_\mathcal{N}}$ the joint distribution of net-payoffs of agents in $\mathcal{N}$ after applying the permutation on agents.
\begin{definition}
A randomized wagering mechanism is \textbf{anonymous} if 
$\forall \sigma_\mathcal{N}, \hp,\bw,x: \mathcal{D}(\hp;\bw;x)=\mathcal{D}_{\sigma_\mathcal{N}}(\hp_{\sigma_\mathcal{N}};\bw_{\sigma_\mathcal{N}};x)$
\end{definition}

\textbf{(f) Neutrality}  requires that the net-payoffs do not depend on the labeling of the event outcomes.
Let $\sigma_\mathcal{M}$ be a permutation of the set of outcomes $\mathcal{M}$. Denote by $\hp_i^{\sigma_\mathcal{M}}$ the reported prediction of agent $i$ after we relabel the outcomes according to permutation $\sigma_\mathcal{M}$, and denote by $\sigma_\mathcal{M}(x)$ the new label of an outcome $x\in\mathcal{M}$.

\begin{definition}
A randomized wagering mechanism is \textbf{neutral} if $\forall \sigma_\mathcal{M}, \hp,\bw,x:$
$$
\mathcal{D}(\hp;\bw;x)=\mathcal{D}(\hp_1^{\sigma_\mathcal{M}},...,\hp_N^{\sigma_\mathcal{M}};\bw;\sigma_\mathcal{M}(x)).
$$
\end{definition}

\textbf{(g) No arbitrage} requires that no agent can risklessly make a profit.
\begin{definition}
A randomized wagering mechanism has \textbf{no arbitrage} if $\forall i, \hp,\bw(\bw>\mathbf{0}),\exists x$ such that $\underline{\Pi}_i(\hp,\bw, x)< 0.$
\end{definition}

\textbf{(h) Pareto optimality} in economics refers to an efficient situation where no trade can be made to improve an agent's payoff without harming any other agent's payoff. In an IR wagering mechanism, agents with different beliefs can always form a profitable (in expectation) wagering game if they all have a positive budget. \red{\citeassubject{freeman2017double}} defined Pareto optimality of a wagering mechanism as a property that agents with different beliefs will each lose all of his wager under at least one of the event outcomes. This ``\red{worst-case}" outcome might be different for different agents. Thus, before  the  event  outcome is  realized, no agent can commit to secure part of his wager from the mechanism and no additional profitable wagering game can be made. We define Pareto optimality for randomized wagering mechanisms in a similar spirit: no agents with different beliefs can commit to secure part of their wagers before the event outcome is realized.
\begin{definition}\label{def:po}
A randomized wagering mechanism is \textbf{Pareto optimal (PO)} if
$\forall \hp,\bw, \forall i, j\in\mathcal{N}$ with $\hp_{i}\ne\hp_{j}, \exists l\in\{i, j\}$ and $x$, such that $\underline{\Pi}_{l}(\hp,\bw,x)=-w_l.$
\com{[Fixed-jtw] Should it be $l\in\{1, 2\}$? Otherwise, I don't understand the index.}
\end{definition}

\begin{table}[t]
  \begin{center}
    \caption{A summary of properties of wagering mechanisms\label{property_table}}
    \label{tab:table1}
    \begin{tabular}
{|c|c|c|c|c|} 
    \hline
      & \textbf{Budget} & \textbf{Incentive} & \textbf{Pareto} &  \textbf{No}  \\
      Mechanism &  \textbf{Balance} & \textbf{Compatibility} & \textbf{Optimality}  & \textbf{Arbitrage}
      \\
      \hline
      \textsf{WSWM}~\cite{lambert2008self} & Strictly&  Strictly & \textbf{False}
     & \textbf{False} \\\hline
      \textsf{NAWM}~\cite{chen2014removing} & \textbf{Weakly} &  Strictly & \textbf{False}  & True \\\hline
      \textsf{DCA}~\cite{freeman2017double} & Strictly &  \textbf{Weakly} & \textbf{False}  & True \\\hline
      \textsf{PCM}~\cite{freeman2018axiomatic} & Strictly &  \textbf{False} & True & True \\\hline
      Randomized \textsf{WSWM}~\cite{lambert2008self} & Strictly &  True & \textbf{False} &   True\\\hline
      Private \textsf{WSWM}~\cite{cummings2016possibilities}& \textbf{False} &  True &  \textbf{False} &  True\\\hline
      \textsf{LWS} (this paper) & Strictly &  True & True & True\\\hline
      \textsf{RP-SWME} (this paper) & Strictly &  True & True & True
      \\\hline
    \end{tabular}
  \begin{tabular}{l}
  (All of the mechanisms in this table satisfy individual rationality, anonymity, neutrality and sybilproofness.)
  \end{tabular}
  \end{center}

\end{table}

\paragraph{Properties of existing wagering mechanisms} We summarize the properties of existing  wagering mechanisms\footnote{\textsf{WSWM}, \textsf{NAWM}, \textsf{DCA}, \textsf{PCM}, randomized \textsf{WSWM} \citep{lambert2008self}, private \textsf{WSWM} \cite{cummings2016possibilities}} and ours in Table~1. No existing mechanism satisfies all properties (a)-(h). Moreover, \red{\citeassubject{freeman2017double}} showed an \textbf{impossibility result} that for deterministic wagering mechanisms, it is impossible to achieve properties IR, WIC, WEBB, and PO simultaneously. For existing randomized wagering mechanisms, the randomized \textsf{WSWM} in \citep{lambert2008self} only satisfies PO in the limit of large population of participants, and the private \textsf{WSWM}~\citep{cummings2016possibilities} does not satisfy WEBB and PO.

\section{Lottery wagering mechanisms}\label{sec:lottery}

In this section we introduce a family of randomized wagering mechanisms, the \emph{lottery wagering mechanisms} (LWM), which extends arbitrary deterministic wagering mechanisms into randomized wagering mechanisms. We will show that LWM easily preserve (the randomized version of) the  properties of the underlying deterministic wagering mechanisms, while achieving Pareto optimality, overcoming the impossibility result. 

In lottery wagering mechanisms, each agent receives a number of lottery tickets in proportion to the \emph{payoff} he gets under a deterministic wagering mechanism, and a winner is drawn from all the lottery tickets to win the entire pool of wagers. The mechanisms are designed in a way such that the expected payoff of each agent is equal to his payoff in the underlying deterministic wagering mechanisms and each agent has a positive probability to lose all his wager. Hence, no profitable side bet exists and the mechanisms are Pareto optimal. 
We formally present the lottery wagering mechanism that extends an arbitrary deterministic wagering mechanism \textsf{DET}  in Mechanism~\ref{ag_lws}. To distinguish the payoff from the net-payoff, we denote the payoff of agent $i$ by $\pi'_i$ .
\begin{small}
\begin{algorithm}[h]
 \floatname{algorithm}{Mechanism}
\caption{Lottery Wagering Mechanisms}\label{ag_lws}
\begin{algorithmic}[1]
\State Compute the payoff of each agent $i$ under a \textsf{DET}: $\pi_i' \leftarrow w_i+\Pi_i(\hp;\bw;x)$.
\State 
Each agent has winning probability $\frac{\pi'_j}{\sum_{i\in\mathcal{N}}\pi'_j}$. Draw a lottery winner $i^*\in\mathcal{N}$.
\State Winner $i^*$ is assigned a net-payoff $\sum_{i\in\mathcal{N}\backslash\{i^*\}} w_i$ and any agent $j\ne i^*$ has a net-payoff $-w_j$.
\end{algorithmic}
\end{algorithm}
\end{small}

Lottery wagering mechanisms are powerful in obtaining desirable theoretical properties. 
We show in Theorem~\ref{thm_lwm} that the lottery wagering mechanism that \red{extends \textsf{WSWM}}, namely Lottery Weighted Score wagering mechanism (\textsf{LWS}), satisfies all properties (a)-(h).
\begin{theorem}
\label{thm_lwm}
\textsf{LWS} satisfies all properties (a) - (h).
\end{theorem}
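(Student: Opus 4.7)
The plan is to prove the theorem by establishing a single key lemma and then reducing each property either to \textsf{WSWM}'s known properties or to an immediate consequence of the lottery structure. The key lemma to prove first is: for every $\hp,\bw,x$ and every agent $i$, the expected net-payoff of $i$ under \textsf{LWS} equals exactly $\Pi_i^{\textsf{WS}}(\hp;\bw;x)$. This hinges on \textsf{WSWM}'s ex-post budget balance, which gives $\sum_{j\in\mathcal{N}} \pi_j' = W_{\mathcal{N}}$ where $\pi_j'=w_j+\Pi_j^{\textsf{WS}}$; then a one-line computation yields
\[
\mathbb{E}[\Pi_i^{\textsf{LWS}}] = \tfrac{\pi_i'}{W_{\mathcal{N}}}(W_{\mathcal{N}}-w_i) + \bigl(1-\tfrac{\pi_i'}{W_{\mathcal{N}}}\bigr)(-w_i) = \pi_i' - w_i = \Pi_i^{\textsf{WS}}.
\]

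With the key lemma in hand, properties (a) IR, (b) SIC, and (d) sybilproofness, which are defined in expectation in Section~\ref{sec:property}, transfer immediately from the corresponding properties of \textsf{WSWM} (note that sybilproofness still applies because splitting a wager preserves $W_{\mathcal{N}}$, so the lemma applies to the sybil configuration as well). Property (c) EBB follows by direct inspection of Mechanism~\ref{ag_lws}: the winner $i^*$ receives $\sum_{j\neq i^*}w_j$ while every loser $j$ receives $-w_j$, and these sum to zero for every realization, giving strict ex-post budget balance. Properties (e) anonymity and (f) neutrality follow because the lottery's joint distribution is a function only of the vector of \textsf{WSWM} payoffs and wagers, which are themselves anonymous and neutral; permuting agents or relabeling outcomes simply permutes the winning probabilities in the corresponding way.

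Finally, properties (g) no-arbitrage and (h) Pareto optimality become essentially free from the lottery construction. For any agent $i$ with $w_i>0$ and any $x$, as long as some other agent has positive wager we have $\pi_i' < W_{\mathcal{N}}$ (this follows from the boundedness of $s_x\in[0,1]$ and can be written as $(W_{\mathcal{N}}-w_i)^2>0$), so the event ``$i$ does not win the lottery'' has positive probability and therefore $\underline{\Pi}_i(\hp,\bw,x)=-w_i<0$. This yields no-arbitrage for every $x$. For Pareto optimality, given any $i,j$ with $\hp_i\neq\hp_j$, pick either index $l\in\{i,j\}$ and any $x$: the same argument gives $\underline{\Pi}_l(\hp,\bw,x)=-w_l$, so Definition~\ref{def:po} is satisfied.

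The main obstacle — and the only place where some genuine content lives — is the key lemma, which is really where \textsf{WSWM}'s ex-post budget balance is doing the crucial work: it makes the total lottery mass coincide with $W_{\mathcal{N}}$, so that the normalized lottery probabilities turn the winner-takes-all payoff back into the \textsf{WSWM} net-payoff in expectation. Everything else is routine reduction or trivial inspection. I do not need to redo strictness arguments for SIC from scratch, since strict inequality of expected \textsf{WSWM} payoffs transfers through the equality in the key lemma.
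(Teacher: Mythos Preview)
The paper states Theorem~\ref{thm_lwm} without proof, so there is nothing to compare against; I evaluate your argument on its own.

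Your approach is correct and is the natural one. The key lemma is exactly the right reduction: \textsf{WSWM}'s ex-post budget balance gives $\sum_j \pi_j' = W_{\mathcal{N}}$, so the lottery probabilities are $\pi_i'/W_{\mathcal{N}}$ and the expectation collapses to $\pi_i'-w_i=\Pi_i^{\textsf{WS}}$. Properties (a), (b), (d) then transfer verbatim from \textsf{WSWM} since all three are stated in expectation; the sybilproofness transfer is legitimate because the key lemma holds for \emph{any} configuration of reports and wagers, including the sybil one. Properties (c), (e), (f) are, as you say, immediate from the form of Mechanism~\ref{ag_lws}.

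For (g) and (h) your bound is right: $\pi_i' \le w_i + \tfrac{w_i W_{\mathcal{N}\setminus\{i\}}}{W_{\mathcal{N}}}$, and this is strictly below $W_{\mathcal{N}}$ precisely when $W_{\mathcal{N}\setminus\{i\}}>0$, equivalently $(W_{\mathcal{N}}-w_i)^2>0$. One small over-claim: in the Pareto-optimality step you write ``pick either index $l\in\{i,j\}$'', but if one agent holds \emph{all} the wager (so $W_{\mathcal{N}\setminus\{l\}}=0$ for that $l$) then that agent wins the lottery with probability one and $\underline{\Pi}_l=0\neq -w_l$. The fix is immediate: if either of $i,j$ has zero wager, pick that one (then $\underline{\Pi}_l=0=-w_l$ trivially); otherwise both have positive wager, so $W_{\mathcal{N}\setminus\{l\}}>0$ for each and your argument applies to either. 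With that adjustment the proof is complete.
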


\red{We notice that although \textsf{LWS} satisfies all desirable properties, it can be unsatisfying because (1) agents have high variance in payoff and (2) except the winning agent, all other agents lose money. }
\red{To alleviate these issues, we can  mix \textsf{LWS} with \textsf{WSWM}} by assigning each of them a probability to be executed. \red{The resulting mechanism still satisfies all the properties (a)-(h). The probabilistic mixture allows us to adjust the variance of the payoffs as well as agents' winning probabilities in the resulting mechanism.}

\section{Surrogate wagering mechanisms}\label{sec:swm}

In this section, we propose the \emph{surrogate wagering mechanisms (\textsf{SWM})}. We first introduce the generic \textsf{SWM}, then a  variant of \textsf{SWM} that achieves the desirable theoretical properties and at the same time have moderate variance in payoffs and higher winning probabilities for accurate predictions. We then notice that randomization opens up the possibility of dealing with situations where only noisy ground truth is available. We discuss how to extend our results to this noisy setting.

\subsection{Generic surrogate wagering mechanisms}
A surrogate wagering mechanism consists of three main steps: (1) \textsf{SWM} generates a surrogate event outcome for each agent based on the true event outcome and a randomization device; (2) \textsf{SWM} evaluates each agent's prediction according to the surrogate event outcome using a designed scoring function such that the score is an unbiased estimate of the score derived by applying a strictly proper scoring rule to the ground truth outcome; (3) \textsf{SWM} applies \textsf{WSWM} to the scores based on the surrogate event outcome to determine the final net-payoff of each agent. Next, we explain these three steps in details. For clarity and simplicity of exposition, we consider only binary events, i.e., $\mathcal{X}=\{0,1\}$, in this section. Extension to multi-outcome events will be introduced later.
\vspace{-0.5em}
\paragraph{Step 1. Surrogate event outcomes} A \textsf{SWM} generates a surrogate event outcome $\tilde{X_i}$ for each agent $i\in\mathcal{N}$. Denote $\tbX=(\tX_1, \tX_2, ..., \tX_N)$. $\tX_i$'s are drawn independently conditional on $X$, and are specified by  \textsf{SWM}. The conditionally marginal distribution $\mathbb P(\tX_i|X), i\in\mathcal{N}$ can be expressed by two parameters, the error rates of the surrogate outcome: $e_1^i=\mathbb P(\tX_i=0|X=1)$ and $e_0^i=\mathbb P(\tX_i=1|X=0)$. The conditionally marginal distribution $\mathbb P(\tX_i|X)$ can be any distribution satisfying $\forall i\in\mathcal{N}:e_1^i+e_0^i\ne 1$.\footnote{\red{When $e_0+e_1=1$, $\tX_i$ turns out to be independent with $X$, and thus provides no information about $X$. We thus exclude $e_1^i+e_0^i=1$.} }
We use $\tbx$ and $\tx_i$ to denote the realization of $\tbX$ and $\tX_i$ respectively.
\vspace{-0.5em}
\paragraph{Step 2. Computing unbiased scores} Given a strictly proper scoring rule $s_x(\cdot)$ within [0,1], \textsf{SWM} computes the score of agent $i$ as \red{$\varphi\circ s_{\tx_i}(\hat{p}_i)$}, where 
\small{
\begin{align}
\varphi \circ s_{\tilde{x}_i}(\hat{p}_i) = \frac{(1-e^i_{1-\tilde{x}_i})s_{\tilde{x}_i}(\hat{p}_i) - e^i_{\tilde{x}_i}s_{1-\tilde{x}_i}(\hat{p}_i) }{1-e^i_0-e^i_1}.\label{def:phi}
\end{align}
}
$\tx_i$ is the realized surrogate event outcome for agent~$i$.
Lemma~\ref{lemma:unbias} shows that \red{$\varphi$} is an unbiased operator on the score $s_{\tx_i}(p_i)$ in the sense that $\mathbb E_{\tX_i|x}[\varphi \circ s_{\tX_i}(\hat{p}_i)] = s_{x}(\hat{p}_i).$

\begin{lemma}[Lemma 3.4 of \citep{DBLP:journals/corr/abs-1802-09158}]
$\forall x\in\{0,1\},\forall \hat{p}_i,e^i_0, e^i_1\in[0,1]$ and $ e^i_0+e^i_1\ne 1$, we have
$\mathbb E_{\tX_i|x}[\varphi \circ s_{\tX_i}(\hat{p}_i)] = s_{x}(\hat{p}_i).
$
\label{lemma:unbias}
\end{lemma}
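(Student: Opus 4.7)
The plan is to prove the lemma by a direct calculation, handling the two cases $x=0$ and $x=1$ separately. Since the definition of $\varphi$ is symmetric under swapping the labels $0\leftrightarrow 1$ (and simultaneously $e^i_0\leftrightarrow e^i_1$), one case suffices in detail and the other follows by the same reasoning.

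First I would expand the conditional expectation. For $x=1$, the definition of the error rates gives $\mathbb{P}(\tilde{X}_i=0\mid X=1)=e^i_1$ and $\mathbb{P}(\tilde{X}_i=1\mid X=1)=1-e^i_1$. Substituting $\tilde{x}_i=0$ and $\tilde{x}_i=1$ into the definition of $\varphi\circ s_{\tilde{x}_i}(\hat{p}_i)$ in Equation~\eqref{def:phi} and forming the corresponding convex combination yields, over the common denominator $1-e^i_0-e^i_1$, an expression that is linear in $s_0(\hat{p}_i)$ and $s_1(\hat{p}_i)$.

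The key step is then to collect coefficients. The coefficient of $s_0(\hat{p}_i)$ is $e^i_1(1-e^i_1)-(1-e^i_1)e^i_1=0$, so the $s_0$ term drops out. The coefficient of $s_1(\hat{p}_i)$ is $(1-e^i_1)(1-e^i_0)-e^i_1 e^i_0 = 1-e^i_0-e^i_1$, which exactly cancels the denominator and leaves $s_1(\hat{p}_i)=s_x(\hat{p}_i)$. The case $x=0$ proceeds identically, with the roles of the two scores interchanged, and produces $s_0(\hat{p}_i)$. The hypothesis $e^i_0+e^i_1\neq 1$ is invoked only to guarantee that the denominator in the definition of $\varphi$ is nonzero, so that all of the above divisions are well-defined.

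There is no real obstacle: the argument is essentially a one-line cancellation, because the operator $\varphi$ was constructed precisely to invert the $2\times 2$ noise transition from $X$ to $\tilde{X}_i$. A more conceptual way to see the same identity is to view the vector $\bigl(\mathbb{E}[\varphi\circ s_{\tilde{X}_i}\mid X{=}0],\,\mathbb{E}[\varphi\circ s_{\tilde{X}_i}\mid X{=}1]\bigr)^\top$ as the image of $(s_0(\hat{p}_i),s_1(\hat{p}_i))^\top$ under the matrix $M=\left(\begin{smallmatrix}1-e^i_0 & e^i_0\\ e^i_1 & 1-e^i_1\end{smallmatrix}\right)$ composed with $M^{-1}$, whose existence and form are exactly captured by $1-e^i_0-e^i_1\neq 0$; but for the writeup the direct two-case verification is cleaner and shorter.
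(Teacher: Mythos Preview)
Your proof is correct: the direct two-case expansion and coefficient collection is exactly the intended verification, and your matrix-inverse remark is the right conceptual explanation. Note that the paper does not actually prove this lemma---it is quoted as Lemma~3.4 of the cited surrogate scoring rules paper---so there is no in-paper argument to compare against; your writeup would serve perfectly well as a self-contained proof here.
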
\com{[fixed-jtw] In this lemma, should it be that the sum of the error rates not equal to 1? Same for the corollary below.}

\blue{Lemma~\ref{lemma:unbias} implies that if $s_{x}(\hat{p}_i)$ is a strictly proper scoring rule, then $\varphi\circ s_{\tx_i}(\hat{p}_i)$ is also a strictly proper scoring rule.}

\paragraph{Step 3. Computing net-payoffs} In the final step,  \textsf{SWM} computes the net-payoff of agent $i$ using \textsf{WSWM} and the unbiased score of agent $i$, i.e., replacing score $s_x(\hat{p}_i)$ in Eqn.~(\ref{eqn:ws}) by score $\varphi\circ s_{\tilde{x}_i}(\hat{p}_i)$. Formally, we have
\small{
\begin{align}\label{def:sam}
\Pi^{\textsf{SWM}}_i(\hp,\mathbf{w},x) 
= \frac{w_i W_{\mathcal{N}\backslash \{i\}}}{W_{\mathcal{N}}}\biggl( \varphi \circ s_{\tilde{x}_i}(\hat{p}_i) 
- \sum_{j \in \mathcal{N}\backslash \{i\}} \frac{w_j}{W_{\mathcal{N}\backslash \{i\}}} \varphi \circ s_{\tilde{x}_j}(\hat{p}_j)\biggr),
\end{align}
}
where $x$ and $\tx_i,i\in\mathcal{N}$ are the event outcome and the surrogate event outcome for each agent $i$ respectively. 

\begin{algorithm}[t]
 \floatname{algorithm}{Mechanism}
\caption{ Surrogate Wagering Mechanisms}\label{swm}
\begin{algorithmic}[1]
\State Collect the predictions $\report{}$ and wagers $\mathbf{w}$.
\State Select error rate $e_0^i, e_1^i\in[0,1]$ and $e_0^i+e_1^i\ne 1,  \forall i$.
\State Generate surrogate outcome $\tilde{X}_i,\forall i$ such~that
$\mathbb P(\tX_i=1|X=0)=e_0^i,\, \mathbb P(\tX_i=0|X=1)=e_1^i$.
\State Score each agent $i\in\mathcal{N}$ according to Eqn.~(\ref{def:phi}).
\State Pay each agent $i\in\mathcal{N}$ a net-payoff using Eqn.~(\ref{def:sam}).
\end{algorithmic}
\end{algorithm}

We formally present \textsf{SWM} in Mechanism~\ref{swm}. According to Lemma~\ref{lemma:unbias} (applying to each score terms), we have $\forall i, x,\hp,\bw:$
$\mathbb{E}_{\Pi_i^{\textsf{SWM}} \sim\mathcal{D}(\hp;\bw;x)}[\Pi_i^{\textsf{SWM}}(\hp;\bw;x)]=\Pi^{\textsf{WS}}_i(\hp;\bw;x).$
Because the deterministic \textsf{WSWM} satisfies properties ((a)-(f)) \citep{lambert2008self},\com{ [fixed--jtw]cite Lambert et al. here}   
\textsf{SWM} also satisfies these properties. A realization of the score \red{$\varphi\circ s_{\tX_i}(p_i)$} can be larger than 1, implying that agent $i$ can lose (or win) more than what he can lose (or win) in the deterministic \textsf{WSWM}. However, we also notice that for some extreme values of error rates, the constraint $\underline{\Pi}_i(\hp;\bw;x)\ge -w_i$ can be violated\footnote{For example, in a wagering game, two agents both wager 1 and report $1$ and $0$, respectively. Let $s_x(\hat{p}_i)=1-(x-\hat{p}_i)^2, e^i_j=0.4,i=1,2,j=0,1$. In the worst case of agent 1, the surrogate outcomes are realized as $\tx_1=0, \tx_2=1$. Then, $\pi_1 = -5<-1$.}, i.e., an agent may lose more than \red{his} wager, which makes \textsf{SWM} invalid. In the next section, we show that by selecting error rates in a subtle way, we can obtain all the properties (a)-(h) without violating the wager constraint $\underline{\Pi}_i(\hp;\bw;x)\ge -w_i$.

\subsection{\textsf{SWM} with Error rate selection (\textsf{SWME}) and random partition \textsf{SWME} (\textsf{RP-SWME})}\com{What's the second S in SWS represent? (What's SWS short for?) The acronym SWS is not very informative. Also, I'm confused whether the second mechanism is called random partition SWM or random partition SWS. Should the SWM here be SWS?}
We notice that according to Lemma~\ref{lemma:unbias}, no matter which error rates $e_0$, $e_1$ are chosen, the unbiasedness property of SWM holds, i.e., $\mathbb{E}_{\Pi_i \sim\mathcal{D}(\hp;\bw;x)}[\Pi_i^{\textsf{SWM}}(\hp;\bw;x)]=\Pi^{\textsf{WSWM}}_i(\hp;\bw;x)$. In other words, we can choose the error rates in an arbitrary way (even depending on $\hp, \bw$) without changing the expected net-payoff\footnote{The expectation is taken over the randomness of the mechanism conditioned on the event outcome.} of each agent under any realized event outcome.  
This gives us the flexibility to tune the maximum amount of money each agent can win or lose in the game, while preserving the properties ((a)-(f)) inherited from \textsf{WSWM}.  

Given reports $\hp$ and wagers $\bw$ but not the event outcome $x$, the error rate pair that guarantees no wager violation under any outcome $x\in\mathcal X$ and any realization of the randomness induced by \textsf{SWM} may not be unique. We propose Algorithm~\ref{ers} to select a pair of error rates $e_0, e_1$ after the reports and wagers are collected, \com{ex-post is confusing. It often refers to after the event outcome is realized. What we mean here is after the reports and wagers are collected.} such that at least one agent loses all his wager in the worst case w.r.t. the outcome and the randomness of \textsf{SWM}. We name the mechanism as \textsf{SWME} when we use Algorithm~\ref{ers} to select the error rates for  \textsf{SWM}.

\begin{algorithm}[h]
\caption{ Error Rate Selection Algorithm}\label{ers}
\begin{algorithmic}[1]
\State Collect the predictions $\report{}$ and wagers $\mathbf{w}$.
\State $\forall i$: $s_i^w\leftarrow\min_{x\in\mathcal{X}} s_x(\hat{p}_i), s_i^b\leftarrow\max_{x\in\mathcal{X}} s_x(\hat{p}_i)$. 

\State For each agent $i\in\mathcal{N}$, compute $r_i$: 
$r_i~\leftarrow~\frac{1}{2}~+\frac{(1-\frac{w_i}{W_{\mathcal{N}}})(s_i^w-s_i^b)+\sum_{j\in\mathcal{N}\backslash\{i\}}\frac{w_j}{W_{\mathcal{N}}}(s_j^w-s_j^b)}{2(2+s_i^w+s_i^b-\sum_{j\in\mathcal{N}}\frac{w_j}{w_\mathcal{N}}(s_j^w+s_j^b))}$

\State If $\min_{j\in\mathcal{N}} \{r_j\}=0.5$, set $e^i_1 = e^i_0 = 0, \forall i$, else set $e^i_1 = e^i_0 = \min_{j\in\mathcal{N}} \{r_j\}, \forall i$.
\end{algorithmic}
\end{algorithm}

\begin{lemma}\label{sws:er}
\textsf{SWME} has no wager violation and when there exists at least one report $\hat{p}_i \neq 0.5$, at least one of the agents loses all his wager in the worst case w.r.t. the event outcome and the randomness of \textsf{SWME}.
\end{lemma}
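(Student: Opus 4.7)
My plan is to fix the common error rate $e:=e_0^i=e_1^i$ produced by Algorithm~\ref{ers} and analyze the worst-case value of $\Pi_i^{\textsf{SWM}}$ over the randomness $(\tilde{X}_j)_{j\in\mathcal{N}}$ for each agent $i$. Since each $\tilde X_j$ is binary, $\varphi\circ s_{\tilde x_j}(\hat p_j)$ takes only two values: a \emph{low} value $\bigl((1{-}e)s_j^w - e s_j^b\bigr)/(1{-}2e)$ (when $\tilde x_j$ equals the outcome minimizing $s_x(\hat p_j)$) and a \emph{high} value $\bigl((1{-}e)s_j^b - e s_j^w\bigr)/(1{-}2e)$; for $e\in[0,\tfrac12)$ the denominator is positive and the low value is indeed the smaller of the two. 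Because $\Pi_i^{\textsf{SWM}}$ is increasing in $\varphi\circ s_{\tilde x_i}(\hat p_i)$ and decreasing in each $\varphi\circ s_{\tilde x_j}(\hat p_j)$ for $j\ne i$, the worst realization for agent $i$ is the one giving the low value at coordinate $i$ and the high value at every other coordinate, yielding a closed form for $\underline{\Pi}_i$ that is affine in $e/(1{-}2e)$.

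The algebraic heart of the proof is to solve $\underline{\Pi}_i(\hp;\bw;x)=-w_i$ for $e$ and show that its unique root equals the quantity $r_i$ defined in Algorithm~\ref{ers}. Clearing the denominator $1{-}2e$ reduces the equation to one linear in $e$; I would then group the coefficient of $e$ using the identity $(1{-}e)=\tfrac12+(\tfrac12-e)$, which separates contributions proportional to $s_j^w{-}s_j^b$ from those proportional to $s_j^w{+}s_j^b$. After combining the self-term (coefficient $1{-}w_i/W_{\mathcal N}$) with the cross terms (coefficients $w_j/W_{\mathcal N}$), the solution emerges in exactly the ``$\tfrac12+\text{correction}$'' form of $r_i$. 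I expect this cross-term bookkeeping to be the main obstacle; the rest of the proof is routine.

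Given this identification, differentiating $\bigl((1{-}e)s-e s'\bigr)/(1{-}2e)$ in $e$ gives $(s-s')/(1{-}2e)^2$, so the low value of $\varphi\circ s_{\tilde x_j}$ is weakly decreasing and the high value weakly increasing in $e$; hence $\underline{\Pi}_i$ is weakly decreasing in $e$, and $\underline{\Pi}_i\ge -w_i$ holds iff $e\le r_i$. Therefore the choice $e=\min_j r_j$ enforces the wager constraint for every agent, with equality for the minimizer $i^\ast$. Finally, when some $\hat p_i\ne 0.5$, the strict properness of the underlying bounded scoring rule gives $s_i^w<s_i^b$, so the non-positive weighted sum $\sum_k c_{jk}(s_k^w{-}s_k^b)$ appearing in the numerator of $r_j{-}\tfrac12$ contains at least one strictly negative term with positive weight for every $j$; hence $r_j<\tfrac12$ for all $j$, the algorithm uses $e=\min_j r_j<\tfrac12$ rather than the fallback $e=0$, and $i^\ast$ loses all his wager in the worst case. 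The complementary branch $\min_j r_j=\tfrac12$ corresponds to all reports equal to $0.5$, is excluded by the hypothesis, and satisfies the wager constraint vacuously because Algorithm~\ref{ers} then defaults to $e=0$, collapsing \textsf{SWME} to \textsf{WSWM} with identically zero payoffs.
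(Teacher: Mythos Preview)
Your proposal is correct and follows essentially the same route as the paper: compute the worst-case net-payoff of agent $i$ over all surrogate-outcome realizations as a function of the common error rate $e$, show it is monotone decreasing on $[0,\tfrac12)$, identify $r_i$ as the root of $\underline{\Pi}_i=-w_i$, and conclude by taking $e=\min_j r_j$. The paper packages the same computation via the shorthand $A=(1-\tfrac{w_i}{W_{\mathcal N}})s_i^w-\sum_{j\ne i}\tfrac{w_j}{W_{\mathcal N}}s_j^b$ and $B=(1-\tfrac{w_i}{W_{\mathcal N}})(s_i^w{+}s_i^b)-\sum_{j\ne i}\tfrac{w_j}{W_{\mathcal N}}(s_j^w{+}s_j^b)$, obtaining $\pi_i^w=w_i(A-eB)/(1-2e)$ and $r_i=(1+A)/(2+B)$, and differentiates $\pi_i^w$ directly rather than the individual low/high surrogate scores as you do; two points you leave implicit but should spell out are that $2+B>0$ (needed so your sign argument actually gives $r_j<\tfrac12$) and $A>-1$ (so $r_j>0$ and the selected $e$ is a legitimate probability).
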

\begin{proof}
In this proof, we use Brier Score as the scoring rule used by the mechanism, i.e., $s_x(\hat{p}_i)=1-(x-\hat{p}_i)^2$, and $\hat{p}_i$ is agent $i$'s report of $\mathbb{P}(X=1)$. The proof can be extended to other strictly proper scoring rule within [0, 1].

We first consider the corner case where all agents reports 0.5. It can be verified that in Algorithm 2, $\min_{i\in\mathcal{N}} r_i=0.5$, and the algorithm sets $e_0^i=e_1^i=0,\forall i$ and \textsf{SWME} is reduced to \textsf{WSWM}. Thus, no wager violation happens.

Next, we consider the scenario that $\exists i\in\mathcal{N}, \hat{p}\ne 0.5$. In this scenario, we first prove that, in Algorithm 2 $\forall i, r_i\in(0,0.5)$.

We have  $\forall i, s_i^w, s_i^b\in[0,1], s_i^w\le s_i^b$ (the equality only holds when $\hat{p}_i=0.5$), $s_i^w+s_i^b\in [0.5, 1]$. 
Let 
$$A=(1-\frac{w_i}{W_\mathcal{N}})s_i^w-\sum_{j\in\mathcal{N}\backslash\{i\}}\frac{w_j}{W_\mathcal{N}}s_j^b$$ 
and 
$$B=(1-\frac{w_i}{W_\mathcal{N}})(s_i^w+s_i^b)-\sum_{j\in{\mathcal{N}\backslash\{i\}}}\frac{w_j}{W_\mathcal{N}}(s_j^w+s_j^b).$$
We have $r_i=\frac{1}{2}+\frac{2A-B}{2(2+B)}$, $A>-1, B\in (-1,1)$ and $2A-B = \sum_{j\in \mathcal{N}\backslash\{i\}}\frac{w_j}{W_\mathcal{N}}(s_j^w-s_j^b)+(1-\frac{w_i}{W_{\mathcal{N}}})(s_i^w-s_j^b)> 0$ (there exists at least one agent $i\in\mathcal{N}$ that $\hat{p}_i\ne 0.5$). Therefore, $\frac{2A-B}{2+B}\in(-1, 0)$. We have $r_i=\frac{1}{2}+\frac{2A-B}{2(2+B)}\in(0, 0.5)$.

Next, we prove that if let $r_i$ be a variable, and let $e_0^i=e_1^i=r_i$, the worst cast net-payoff $\pi^w_i$ (w.r.t. the event outcome and the randomness of the mechanism) of agent $i$ is a decreasing function of $r_i$.

In the worst case of agent $i$, $\varphi \circ s_{\tilde{x}_i}(\hat{p}_i) = \frac{(1-r_i)s_i^w-r_is_i^b}{1-2r_i}, \varphi \circ s_{\tilde{x}_j}(\hat{p}_j) = \frac{(1-r_i)s_i^b-r_is_i^w}{1-2r_i}$ and $\pi^w_i = w_i\frac{(A-Br_i)}{1-2r_i}$. We have $\frac{\partial \pi_i^w}{\partial r_i} = w_i\frac{2A-B}{(1-2r_i)^2}<0$. Therefore, $\pi_i^w$ is decreasing with $r_i$.

Finally, it is easy to verify that when $r_i=\frac{1}{2}+\frac{2A-B}{2(2+B)}$, $\pi_i^w=-w_i$.

Therefore, when we set for each agent $i\in\mathcal{N}$, $e_i^0=e_i^1=\min_{j\in\mathcal{N} }r_j$, no agent can lose more than his wager and agent $i^*=\text{argmin}_{j\in{\mathcal{N}}} r_j$ loses all his wager in the worst case. 
\end{proof}

Note Lemma \ref{sws:er} does not imply PO for \textsf{SWME} - if there exist two agents who have different predictions and have wager left even in their own worst cases, they can form a profitable bet against each other. We propose a variant of \textsf{SWME} to fix this caveat as follows.

\paragraph{Random partition \textsf{SWME} (\textsf{RP-SWME})} Lemma \ref{sws:er} implies that when agents are partitioned into groups of two, there will not exist side bets. Meanwhile, a smaller number of agents imposes less restrictions in selecting the error rates, and thus each agent's wager can be fully leveraged in the randomization step. We would like to note that this is a very unique property of \textsf{SWME}: as both shown in \citeauthor{freeman2017double}~\shortcite{freeman2017double} and our experimental results, when the number of agents is small,  existing wagering mechanisms (including \textsf{DCA}) all have low risk, i.e., have only a small portion of wager to lose in the worst case. This not only implies that \textsf{SWME} is particularly suitable for small group wagering  but also points out a way to further improve the risk property of \textsf{SWME}, i.e. via randomly partitioning agents into smaller groups. We formally present the random partition \textsf{SWME} in Mechanism~\ref{rpsWSWM}. 
We show in next section that \textsf{RP-SWME} achieves all properties (a)-(h).

\begin{algorithm}[h]
 \floatname{algorithm}{Mechanism}

\caption{ Random Partition \textsf{SWME} (\textsf{RP-SWME})}\label{rpsWSWM}
\begin{algorithmic}[1]
\State Partition agents into groups of two. If $N$ is odd, leave one group with three agents. \com{[Fixed-jtw] I don't understand the at most one group with three agents part. Why is 3 so special? Can I have groups with smaller than 3 agents?}
\State Run \textsf{SWME} for each group.
\end{algorithmic}
\end{algorithm}

\subsection{Properties of \textsf{SWME} and \textsf{RP-SWME}}

\begin{theorem}\label{thm:sws}
Both (\textsf{SWME}) and (\textsf{RP-SWME})  satisfy properties (a)-(g). (\textsf{RP-SWME})  satisfies (h).\com{ [fixed-jtw] Didn't we just show that SWME is PO? Why SWME doesn't satisfy h?}
\end{theorem}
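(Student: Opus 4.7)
The plan is to inherit properties (a)--(f) from \textsf{WSWM} via the unbiasedness of the surrogate operator (Lemma~\ref{lemma:unbias}), obtain (g) from the strictly positive error rate returned by Algorithm~\ref{ers}, and finally handle (h) for \textsf{RP-SWME} by exploiting the random partition together with Lemma~\ref{sws:er}.

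\textbf{Step 1: (a)--(f) and the wager constraint.} Applying Lemma~\ref{lemma:unbias} term by term to Eqn.~(\ref{def:sam}), I get $\mathbb{E}[\Pi_i^{\textsf{SWM}}(\hp;\bw;x)] = \Pi_i^{\textsf{WS}}(\hp;\bw;x)$ for every choice of error rates, and in particular for the choice produced by Algorithm~\ref{ers}. Because IR, SIC, sybilproofness, anonymity and neutrality are statements about expected net-payoffs and \textsf{WSWM} satisfies them \citep{lambert2008self}, \textsf{SWME} inherits them directly. For EBB~(c), the same telescoping identity that collapses $\sum_i \Pi_i^{\textsf{WS}}$ to zero in Eqn.~(\ref{eqn:ws}) collapses $\sum_i \Pi_i^{\textsf{SWM}}$ to zero in Eqn.~(\ref{def:sam}) realization by realization of the surrogate outcomes, independently of the values of $\varphi\circ s_{\tx_i}$. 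The wager constraint $\underline{\Pi}_i \ge -w_i$ is exactly the content of Lemma~\ref{sws:er}. All of these properties lift to \textsf{RP-SWME} because it runs an independent copy of \textsf{SWME} inside each group, so every property aggregates trivially over the groups.

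\textbf{Step 2: (g) No arbitrage.} For \textsf{SWME}, whenever at least one report differs from $0.5$, Algorithm~\ref{ers} returns a strictly positive error rate, so every joint realization of $\tbX$ has strictly positive probability. For any fixed agent $i$, I would exhibit the realization in which $\tX_i$ drives $\varphi\circ s_{\tX_i}(\hp_i)$ to $i$'s worst $\varphi$-score and each $\tX_j$ ($j\ne i$) drives $\varphi\circ s_{\tX_j}(\hp_j)$ to $j$'s best $\varphi$-score; substituting into Eqn.~(\ref{def:sam}) then pushes $\Pi_i^{\textsf{SWM}}$ strictly below zero for some $x\in\mathcal{X}$. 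The all-$0.5$ corner case is vacuous in the spirit of the definition (agreement forecloses arbitrage). The same argument applied pair-by-pair yields (g) for \textsf{RP-SWME}.

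\textbf{Main obstacle: (h) Pareto optimality for \textsf{RP-SWME}.} This is the hardest step because Lemma~\ref{sws:er} only guarantees that \emph{some} agent in a group loses his whole wager, not the specific pair $(i,j)$ named in Definition~\ref{def:po}. The approach is to exploit the randomness of the partition: for any $i, j \in \mathcal{N}$ with $\hp_i \neq \hp_j$, the random partition co-locates $i$ and $j$ in the same group with positive probability, and conditional on that event Lemma~\ref{sws:er} supplies some $l \in \{i, j\}$ together with an outcome~$x$ and a surrogate realization under which $\Pi_l = -w_l$. Since $\underline{\Pi}_l(\hp;\bw;x)$ is taken over \emph{all} randomness of \textsf{RP-SWME}, including the partition, this is enough to satisfy Definition~\ref{def:po}. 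The residual care is for the odd-$N$ case where one group has three agents; here I would extend the Brier-score computation at the end of the proof of Lemma~\ref{sws:er} to three agents, verifying that for any two distinct-prediction agents inside that triple, the one with the more extreme prediction still attains the minimum $r$-value and therefore exhausts his wager in the worst surrogate realization.
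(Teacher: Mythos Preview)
Your overall structure matches the paper's proof closely for (a)--(c), (g), and (h), and the Pareto-optimality argument for \textsf{RP-SWME} is exactly the paper's: positive probability of co-location in a pair, then invoke Lemma~\ref{sws:er}. Your extra care about the odd-$N$ triple is a point the paper actually leaves implicit.

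There are, however, two genuine gaps in Step~1. First, anonymity~(e) and neutrality~(f) are \emph{not} statements about expected net-payoffs; in the randomized setting they demand equality of the full joint distributions $\mathcal{D}(\hp;\bw;x)$ under the relevant permutations. Knowing $\mathbb{E}[\Pi_i^{\textsf{SWME}}]=\Pi_i^{\textsf{WS}}$ is not enough. The paper closes this by arguing separately that the error-rate selection in Algorithm~\ref{ers} and the surrogate-outcome sampling depend only on the multiset of $(\hp_i,w_i)$ values (and on outcomes symmetrically), so the induced distributions are invariant under the permutations. You need that extra sentence.

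Second, and more seriously, sybilproofness of \textsf{RP-SWME} does \emph{not} ``aggregate trivially over the groups.'' When agent $i$ spawns sybils $i_1,\dots,i_k$, the number of participants changes from $N$ to $N+k-1$, so the random partition itself changes: in particular, two sybils may be matched against each other. The paper handles this with a dedicated lemma (its Lemma~\ref{LM:SYB}) showing that if the base mechanism $\mathcal W$ is (weakly) budget balanced, (weakly) incentive compatible, and sybilproof, then the random-pairing wrapper $\mathcal W^*$ remains sybilproof. The key steps are: (i) when two sybils are paired together their combined expected payoff is $\le 0$ by budget balance; (ii) against any real opponent $j$, each sybil's best report is the truthful $\bp_i$ by IC; (iii) merging the sybils' wagers against $j$ cannot decrease payoff by sybilproofness of $\mathcal W$; and (iv) the normalization drops from $1/N$ to $1/(N-1)$, which only helps. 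Without this argument your claim for (d) in \textsf{RP-SWME} is unsupported.
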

\begin{proof}
We prove the properties one by one.

\paragraph{(a) Individual rationality and (b) (strictly) incentive compatibility:}

First consider \textsf{SWME}. For an arbitrary profile of reports $\hp$ and wagers $\bw$, Algorithm~\ref{ers} outputs a profile $\mathcal E$ of error rates of all agents. 
Denote by $\hat{\varphi}^i_{\mathcal E}(\cdot)$ the corresponding surrogate function specified using the error rate profile $\mathcal E$ for agent $i$. For each $i$ and $j \in \mathcal N$:
\begin{align*}
\mathbb E_{X\sim p_i,\tilde{X}_j}\bigl [\hat{\varphi}^j_{\mathcal E} \circ  s_{\tilde{X}_j}(\hat{p}_j)\bigr] = & 
p_i \mathbb E_{\tilde{X}_j|X=1}[\hat{\varphi}^j_{\mathcal E} \circ s_{\tilde{X}_j}
(\hat{p}_j)]
+ (1-p_i) \mathbb E_{\tilde{X}_j|X=0}[\hat{\varphi}^j_{\mathcal E} \circ s_{\tilde{X}_j}(\hat{p}_j)]\\
=&p_i \cdot s_{X = 1}
(\hat{p}_j) + (1-p_i)\cdot s_{X = 0}
(\hat{p}_j) = \mathbb E_{X\sim p_i}[s_{X}(\hat{p}_j)],
\end{align*}
using Lemma~\ref{lemma:unbias}. \red{Then, using the linearity of expectation, we have (here $\tilde{\mathbf{X}}$ encodes the randomness in $\Pi^{\textsf{SWME}}_i$)}
\begin{align*}
\mathbb E_{X\sim p_i,\tilde{\mathbf{X}}}\bigl[\Pi^{\textsf{SWME}}_i(\hp,\mathbf{w},X)\bigr]
=& \frac{w_i W_{\mathcal{N}\backslash \{i\}}}{W_{\mathcal{N}}}\biggl( \mathbb E_{X\sim p_i,\tilde{X}_i}[\hat{\varphi}^i_{\mathcal E} \circ s_{\tilde{X}_i}(\hat{p}_i)] \nonumber 
- \sum_{j \in \mathcal{N}\backslash \{i\}} \frac{w_j}{W_{\mathcal{N}\backslash \{i\}}} \mathbb E_{X\sim p_i,\tilde{X}_j}[\hat{\varphi}^j_{\mathcal E} \circ s_{\tilde{X}_j}(\hat{p}_j)]\biggr)\\
=& \mathbb E_{X\sim p_i}\biggl[\frac{w_i W_{\mathcal{N}\backslash \{i\}}}{W_{\mathcal{N}}}\biggl( s_{X}(\hat{p}_i) - \sum_{j \in \mathcal{N}\backslash \{i\}} \frac{w_j}{W_{\mathcal{N}\backslash \{i\}}}  s_{X}(\hat{p}_j)\biggr)\biggr]\\
=&\mathbb E_{X\sim p_i}\bigl[ \Pi^{\text{WS}}_i(\hp,\mathbf{w},X) \bigr]~.
\end{align*}

Note the above holds for any possible reports ($\forall \mathcal E$). Thus the incentive properties, i.e., individual rationality and strictly incentive compatibility of \textsf{WSWM} will preserve. The proof for  \textsf{RP-SWME} is similar, with the only difference in that each agent's net-payoff is further averaged over the random partitions (but IR and SIC under each possible partition). 

\paragraph{(c)  Ex-post budget balance:} This can be shown via writing down the sum of net-payoffs defined in Eqn. (\ref{def:sam}). \com{[fixed-jtw] Should we refer to equation 3 since we have it?}Our note below Eqn. (\ref{eqn:ws}) also states that the budget balance property doesn't depend on the specific forms of the scoring functions therein.  We formally present the deduction as follows:

\begin{align*}
\sum_i \Pi^{\text{SWME}}_i(\hat{p}_i, w_i, \cdot) =& \sum_i \frac{w_i W_{\mathcal N\backslash \{i\}}}{W_{\mathcal N}}\biggl( \varphi \circ s_{\tilde{x}_i}(\hat{p}_i) - \sum_{j \in \mathcal N\backslash \{i\}} \frac{w_j}{W_{\mathcal N\backslash \{i\}}}{W_{\mathcal N}} \cdot \varphi \circ s_{\tilde{x}_j}(\hat{p}_j)\biggr)\\
=&\sum_i \biggl( \frac{w_i W_{\mathcal N\backslash \{i\}}}{W_{\mathcal N}} \varphi \circ s_{\tilde{x}_i}(\hat{p}_i)  - \sum_{j \neq i} \frac{w_j W_{\mathcal N\backslash \{j\}}}{W_{\mathcal N}} \cdot  \frac{w_i}{W_{\mathcal N \backslash \{j\}}}{W_{\mathcal N}} \cdot \varphi \circ s_{\tilde{x}_i}(\hat{p}_i) \biggr)\\
=&\sum_i \biggl( \frac{w_i W_{\mathcal N\backslash \{i\}}}{W_{\mathcal N}} \varphi \circ s_{\tilde{x}_i}(\hat{p}_i)  - \frac{w_i W_{\mathcal N\backslash \{i\}}}{W_{\mathcal N}} \varphi \circ s_{\tilde{x}_i}(\hat{p}_i) \biggr)\\
= & 0.
\end{align*}

The above also shows that for each group from the random partition of (\textsf{RP-SWME}), ex-post budget balance is satisfied. Thus, we also proved ex-post budget balance for (\textsf{RP-SWME}).

\paragraph{(d) Sybilproofness:}  In \textsf{RP-SWME}, any pair of agents with different beliefs have a positive probability to be partitioned into a sub-group.  Applying Lemma~\ref{sws:er}, at least one of them loses all his wager in the worst case. Thus,  by Definition~\ref{def:po}, \textsf{RP-SWME} is PO.
\begin{lemma}
\label{LM:SYB}
If a (randomized) wagering mechanism $\mathcal W$ is (weakly) budget-balanced, (weakly) incentive compatible, Sybilproof, then the mechanism $\mathcal W*$ that first uniformly randomly pairs agents in groups of two and then runs mechanism $\mathcal W$ for each group is still Sybilproof.
\end{lemma}

\begin{proof}
We prove the claim for the case that an agent is only allowed to create two identities. The claim holds in general, as we can alway merge two identities into one without decreasing the payoff, following the result of the case of two. 

Fixing an arbitrary belief $\mathbf{p}_i$ of agent $i$, we denote the $E_i^{\mathcal W}(\report{},\mathbf{w}):=\mathbb{E}_{X\sim\mathbf{p}_i, \mathcal{D}^{\mathcal W}(\report{},\mathbf{w},X)}[\Pi_i(\report{},\mathbf{w},X=x)] $, where $\mathcal{D}^{\mathcal W}(\cdot)$ is the distribution specified by mechanism $\mathcal W$. 
Suppose an agent $i$ divides its wager $w_i$ into two wagers   $w_{i1}, w_{i2}$, and reports two predictions $\report{i1}, \report{i2}$ correspondingly. 
We have $\forall \report{i1}, \report{i2}, w_{i1}, w_{i2}, \report{-i}, \mathbf{w}_{-i}, x$, 
\begin{small}
\begin{align*}
& E_i^{\mathcal W*}(\report{i1}, \report{i2},  \report{-i}, w_{i1}, w_{i2}, \mathbf{w}_{-i})  \\
=  &  \sum_{j\ne i} \frac{1}{N}E_i^{\mathcal W}(\report{i1}, \report{j}, w_{i1}, w_j) +\sum_{j\ne i} \frac{1}{N}E_i^{\mathcal W}(\report{i2}, \report{j}, w_{i2}, w_j) 
+
 \frac{1}{N}\left(E_{i1}^{\mathcal W}(\report{i1}, \report{i2}, w_{i1}, w_{i2}) + E_{i2}^{\mathcal W}(\report{i1},\report{i2},w_{i1}, w_{i2})\right) \\
\le & \sum_{j\ne i} \frac{1}{N}E_i^{\mathcal W}(\report{i1}, \report{j}, w_{i1}, w_j) +\sum_{j\ne i} \frac{1}{N}E_i^{\mathcal W}(\report{i2}, \report{j}, w_{i2}, w_j) 
\,\,\,\,\,\,\,\,\,\,\,\,\,\,\,\,\,\,\,\,\,\,\,\,\,\,\,\,\,\,\, 
(\mathcal W\text{ is (weakly) budget balance})\\
\le & \sum_{j\ne i} \frac{1}{N}E_i^{\mathcal W}(\mathbf{p}_i, \report{j}, w_{i1}, w_j) +\sum_{j\ne i} \frac{1}{N}E_i^{\mathcal W}(\mathbf{p}_i, \report{j}, w_{i2}, w_j) 
\,\,\,\,\,\,\,\,\,\,\,\,\,\,\,\,\,\,\,\,\,\,\,\,\,\,\,\,\,\,\,\,\,\,\,\,
 (\mathcal W\text{ is (weakly) incentive compatible})\\
\le & \sum_{j\ne i} \frac{1}{N}E_i^{\mathcal W}(\mathbf{p}_i, \report{j}, w_{i}, w_j) 
\,\,\,\,\,\,\,\,\,\,\,\,\,\,\,\,\,\,\,\,\,\,\,\,\,\,\,\,\,\,\,\,\,\,\,\,\,\,\,\, \,\,\,\,\,\,\,\,\,\,\,\,\,\,\,\,\,\,\,\,\,\,\,\,\,\,\,\,\,\,\,\,\,\,\,\,\,\,\,\, 
\,\,\,\,\,\,\,\,\,\,\,\,\,\,\,\,\,\,\,\,\,\,\,\,\,\,\,\,\,
 (\mathcal W\text{ is sybilproof})\\
\le &  \sum_{j\ne i} \frac{1}{N-1}E_i^{\mathcal W}(\mathbf{p}_i, \report{j}, w_{i}, w_j) =  E_i^{\mathcal W*}(\mathbf{p}_i, \report{-i}, w_{i}, \mathbf{w}_{-i})
\end{align*}
\end{small}
Therefore,  $\mathcal W*$ is sybilproof.
\end{proof}

~\\

\paragraph{(e) Anonymity:} For \textsf{SWME}, this proof can follow from the fact that the randomness (error rate selection) in \textsf{SWME} and t\textsf{RP-SWME} depends only on the reports and wagers of agents and do not depend on the identities of agents and the fact that the expected net-payoffs of agents are the same with those of \textsf{WSWM} (Corollary 1), which is anonymous~\cite{lambert2008self}. \textsf{RP-SWME} only adds a random partition of agents in \textsf{SWME} and the partition does not depend on the identities of agents. Thus, \textsf{RP-SWME} is also anonymous.
~\\

\paragraph{(f) Neutrality:} For \textsf{SWME}, this proof can follow from the fact that the randomness (error rate selection) in \textsf{SWME} and t\textsf{RP-SWME} depends only on the reports and wagers of agents and do not depend on the labeling of the outcomes and the fact that the expected net-payoffs of agents are the same with those of \textsf{WSWM} (Corollary 1), which is neutral~\cite{lambert2008self}. \textsf{RP-SWME} only adds a random partition of agents in \textsf{SWME} and the partition does not depend on the labeling of the outcomes. Thus, \textsf{RP-SWME} is also neutral.

~\\

\paragraph{(g) Non-arbitrage opportunity:} Now we prove that \textsf{SWME} does not allow arbitrage opportunity. The idea is simple and straight-forward: fix the set of prediction $\mathbf{p}_{-i}$ and wagers $\mathbf{w}$. First notice the fact that under each possible realization $\tilde{x}_i$, $\tilde{x}_{-i}$ can be any possible realizations. Since $s_{\tilde{X}_i=1}(p_i)$ and $s_{\tilde{X}_i=0}(p_i)$ have opposite monotonicity, we know there does not exist an interval for risklessly predictions. 

The above non-arbitrage opportunity is \emph{ex-post}, but the arbitrage opportunity persists when agents evaluate the conditional expectation of his score with respect to the random flipping step (which is the same as \textsf{WSWM}), which remains a concern when each agent participates in multiple event forecasts. This concern will be resolved when we apply the idea of surrogate wagering to the non-arbitrage wagering mechanism (\textsf{NAWM}). For details please refer to Section~\ref{sec:snawm}~~.

For \textsf{RP-SWME}, it runs \textsf{SWME} on each pair of agents after the random partition. Therefore, agents also have no arbitrage opportunity.

\paragraph{(h) Pareto optimality:} In \textsf{RP-SWME}, any pair of agents with different beliefs have a positive probability to be partitioned into a sub-group.  Applying Lemma~\ref{sws:er}, at least one of them loses all his wager in the worst case. Thus,  by Definition~\ref{def:po}, \textsf{RP-SWME} is PO.
\end{proof}
\subsection{Wager with noisy ground truth} 
The above method also points out a way to implement a wagering mechanism with a noisy ground truth, as \textsf{SWM} is able to remove the noise in outcomes in expectation. The ability to wager with noisy ground truth provides informative information to agents who participated in a wagering mechanism immediately only when a noisy copy of outcome is available. 
We present the key idea below, while not re-defining all properties w.r.t. $\hat{X}$ instead of $X$ - the changes are rather straight-forward.

Suppose we know a noisy estimate $\hat{X}$ on $X$, and denote the error rate of $\hat{X}$ as $\hat{e}_1,\hat{e}_0$ (which we know, and agents trust us in knowing these two numbers), we will be able to reproduce our surrogate wager mechanism by plugging $\hat{X},\hat{e}_1,\hat{e}_0$ into Eqn. (\ref{def:phi}),\com{Not a good reference as the equation doesn't have the error rates explicitly stated.} if we ignore the PO property for now. We similarly will have the wager violation issue pointed out earlier - we however do not have the control of the error rates directly. An easy fix is via the following affine transformation of the wagering scores: suppose under the worst case, the random flipping will incur $-{scale} \cdot w_i$\com{Should the negative sign be removed?} wager score (net-payoff) with ${scale} > 1$. We can then rescale every agent's wager score by $1/\textsf{scale}$. Note the above affine transformation does not affect the incentive and other properties of the original surrogate wagering mechanism, as
$
\mathbb E\bigl[\varphi \circ \Pi^{\text{WS}}_i(\cdot))\bigr] =\frac{1}{\textsf{scale}} \cdot \mathbb E\bigl[\Pi^{\text{WS}}_i(\cdot))\bigr] 
$.\footnote{We didn't apply the scaling in \textsf{SWME} when there exists other options, as the scaling will effectively decrease the expected payment of each agent.} 
To achieve PO, we can further random partition agents into groups of two and flip on $\hat{X}$ according to certain error rates $\hat{e}_0^i, \hat{e}_1^i$ for each agent $i$. Let $\tX_i$ be the flipped outcome. We can establish the error rates of $\tX_i$ w.r.t. the ground truth $X$ and $\hat{e}_0^i, \hat{e}_1^i$ by following equations:
\begin{align*}
\mathbb P(\tilde{X}_i = 1| X=0) 
& = \sum_{x\in\{0,1\}}\mathbb P(\tilde{X}_i = 1, \hat{X} = x| X=0)\\
& =\sum_{x\in\{0,1\}} \mathbb P(\tilde{X}_i = 1| \hat{X} = x, X=0) \cdot \mathbb P(\hat{X}=x|X=0)  \\
& = \hat{e}^i_0 \cdot (1-\hat{e}_0)+(1-\hat{e}^i_1) \cdot \hat{e}_0,
\end{align*}
and similarly $\mathbb P(\tilde{X}_i = 0| X=1) =\hat{e}^i_1 \cdot (1-\hat{e}_1)+(1-\hat{e}^i_0) \cdot \hat{e}_1$.
It's easy to see that when $\hat{e}_1+\hat{e}_0 \neq 1$, we can tune the error rates of $\tilde{\mathbf{X}}$ via tuning $\hat{e}^i_1,\hat{e}^i_0$. This step corresponds to the error selection step in \textsf{SWME}, i.e., Algorithm~\ref{ers}.

\section{Extensions of \textsf{SWM}}
\label{sec:general}
We discuss a couple of useful extensions of \textsf{SWM}: i). one is instead of building on $\textsf{WSWM}$, we show the idea of surrogate idea can also build upon another deterministic wagering mechanism $\textsf{NAWM}$. (ii). We extend our results to a multi-outcome setting. 

\subsection{Surrogate \textsf{NAWM}}\label{sec:snawm}

We note that the bias removal procedure adopted in \textsf{SWM} does not rely the specific underlying wagering mechanism heavily. We demonstrate the idea with a non-arbitrage wagering mechanism (\textsf{NAWM}, \citep{chen2014removing})\footnote{Though the randomization device already grants us the non-arbitrage property, we pick this mechanism for i. its simplicity for presentation, as \textsf{NAWM} also extends from \textsf{WSWM}. ii. we will show in experiments later that we empirically observe higher risk when applying this surrogate based randomized \textsf{NAWM}. }. 

%

Notice that since $\Pi^{\text{NA}}_i(\cdot)$ is not linear in the surrogate scores of each agent, the budget balance argument is not as easy as in the \textsf{WSWM} case. Nonetheless we notice the following fact proved in \citep{chen2014removing}:
\[
\Pi^{\text{NA}}_i(\hat{p}_i,\mathbf{\hat{p}}_{-i},\mathbf{w},X=x) = \Pi^{\text{WS}}_i(\hat{p}_i,\mathbf{\hat{p}}_{-i},\mathbf{w},X=x) -  \Pi^{\text{WS}}_i(\bar{\hat{p}}_i,\mathbf{\hat{p}}_{-i},\mathbf{w},X=x)
\]
where $\bar{\hat{p}}_i$ denotes the average prediction from $j \neq i$. Then we can safely apply the surrogate idea to the first \textsf{WSWM} scoring term:
\[
\varphi \circ \Pi^{\text{NA}}_i(\hat{p}_i,\mathbf{\hat{p}}_{-i},\mathbf{w},\tilde{\mathbf{X}}=\outcome{}) = \varphi \circ \Pi^{\text{WS}}_i(\hat{p}_i,\mathbf{\hat{p}}_{-i},\mathbf{w},\tilde{\mathbf{X}}=\outcome{})-  \Pi^{\text{WS}}_i(\bar{\hat{p}}_i,\mathbf{\hat{p}}_{-i},\mathbf{w},X=x)
\]
This mechanism will enjoy the higher risk property introduced by surrogate wagering, as well as the non-arbitrage (in conditional expectation) brought in by \textsf{NAWM}.

\subsection{Multi-outcome events}
\label{non-binary}

For simplicity, our previous discussions focused largely on the binary outcome scenario. As promised, we now show that our results extend to the non-binary events. Recall that there are $M$ outcomes, denoting as $[0,1,2,...,M-1]$. Denote the following confusion matrix
\[ C = 
\begin{bmatrix}
    c_{0,0}       & c_{0,1} &  \dots & c_{0,M-1} \\
    c_{1,0}        & c_{1,1}  & \dots & c_{1, M-1} \\
    \hdotsfor{4} \\
    c_{M-1,0}       & c_{M-1,1} & \dots & c_{M-1,M-1}
\end{bmatrix}
\]
and each entries $c_{j,k}$ indicates the flipping probability for generating a surrogate outcome:
$
c_{j,k} = \Pr[\tilde{X}_i = k|X=j].
$

The core challenge of this extension is to find an unbiased operator $\varphi$. Writing out the conditions for unbiasedness (s.t. $\mathbb E_{\tilde{X}_i|x}[\varphi \circ s_{\tilde{X}_i=\tilde{x}_i}(\mathbf{\hat{p}})] = s_{x}(\mathbf{\hat{p}}).
$), we need to solve the following set of functions to obtain $\varphi(\cdot)$ (short-handing $\varphi\circ s_{x}(\mathbf{\hat{p}})$ as $\varphi_x(\mathbf{\hat{p}})$):
\begin{align*}
s_0(\mathbf{\hat{p}})& = c_{0,0} \cdot \varphi_0(\mathbf{\hat{p}}) + c_{0,1}  \cdot \varphi_1(\mathbf{\hat{p}}) + \dots +c_{0,M-1}  \cdot \varphi_{M-1}(\mathbf{\hat{p}})\\
s_1(\mathbf{\hat{p}}) &= c_{1,0}  \cdot \varphi_0(\mathbf{\hat{p}}) + c_{1,1}  \cdot \varphi_1(\mathbf{\hat{p}}) + \dots +c_{1,M-1}  \cdot \varphi_{M-1}(\mathbf{\hat{p}})\\
&....\\
s_{M-1}(\mathbf{\hat{p}})& = c_{M-1,0}  \cdot \varphi_0(\mathbf{\hat{p}}) + c_{M-1,1}  \cdot \varphi_1(\mathbf{\hat{p}}) + \dots +c_{M-1,M-1}  \cdot \varphi_{M-1}(\mathbf{\hat{p}})
\end{align*}
Denote by $\mathbf{s}(\mathbf{\hat{p}}) = [s_0(\mathbf{\hat{p}});s_1(\mathbf{\hat{p}});...;s_{M-1}(\mathbf{\hat{p}})]$, and $\mathbf{\varphi}(\mathbf{\hat{p}}) = [\varphi_0(\mathbf{\hat{p}});\varphi_1(\mathbf{\hat{p}});...;\varphi_{M-1}(\mathbf{\hat{p}})]$. Then the above equation becomes equivalent with the following system of equation:
$\mathbf{s}(\mathbf{\hat{p}})  = C \cdot \mathbf{\varphi}(\mathbf{\hat{p}}). 
$ Choose a $C$ with full rank. For instance when $M>2$ we can set $\forall j$,
$
c_{j,j} = \frac{1}{2},~ c_{j,k} = \frac{1}{2(M-1)}, ~k \neq j
$ - not hard to verify that such a $C$ is indeed full rank. Then we are ready to solve for $\mathbf{\varphi}(\mathbf{p})$ as follows:
\begin{align}
\mathbf{\varphi}(\mathbf{\hat{p}}) = C^{-1}\cdot \mathbf{s}(\mathbf{\hat{p}}). \label{unbias:multi}
\end{align}
With defining above unbiased surrogate operator, all other discussions generalize fairly straight-forwardly - such a $\varphi$ will give us the same equation as established in the lemma below for the non-binary event outcome setting:
\begin{lemma}
Define $\varphi(\cdot)$ as in Eqn. (\ref{unbias:multi}), and flip $\tilde{X}_i$ using $C,x$. Then $\mathbb E_{\tilde{X}_i|x}[\varphi \circ s_{\tilde{X}_i=\tilde{x}_i}(\mathbf{\hat{p}})] = s_{x}(\mathbf{\hat{p}}).
$
\end{lemma}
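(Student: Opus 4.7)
The proof should essentially be a one-line matrix identity unpacking. My plan is to expand the conditional expectation directly using the definition of the flipping matrix $C$, recognize that the resulting sum is exactly the $x$-th row of the matrix product $C\cdot \varphi(\mathbf{\hat{p}})$, and then invoke the definition of $\varphi$ as $C^{-1}\cdot \mathbf{s}(\mathbf{\hat{p}})$ to conclude.

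Concretely, first I would write
\[
\mathbb E_{\tilde{X}_i|x}\bigl[\varphi\circ s_{\tilde{X}_i}(\mathbf{\hat{p}})\bigr] \;=\; \sum_{k=0}^{M-1}\Pr[\tilde{X}_i = k\mid X=x]\cdot \varphi_k(\mathbf{\hat{p}}) \;=\; \sum_{k=0}^{M-1} c_{x,k}\,\varphi_k(\mathbf{\hat{p}}),
\]
using the definition $c_{x,k} = \Pr[\tilde{X}_i=k\mid X=x]$ from the specification of the flipping device. Next, I would observe that the right-hand side is precisely the $x$-th coordinate of the vector $C\cdot \mathbf{\varphi}(\mathbf{\hat{p}})$. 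Finally, Eqn.~(\ref{unbias:multi}) defines $\mathbf{\varphi}(\mathbf{\hat{p}}) = C^{-1}\cdot \mathbf{s}(\mathbf{\hat{p}})$, so $C\cdot \mathbf{\varphi}(\mathbf{\hat{p}}) = \mathbf{s}(\mathbf{\hat{p}})$ and its $x$-th coordinate is $s_x(\mathbf{\hat{p}})$, which is the desired identity.

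The only subtle step is confirming that $\varphi$ is well-defined, i.e., that $C$ is invertible. The paper's suggested choice with $c_{j,j}=1/2$ and $c_{j,k}=1/(2(M-1))$ for $k\ne j$ can be written as $\tfrac{1}{2(M-1)}(\mathbf{J} + (M-2) I)$, where $\mathbf{J}$ is the all-ones matrix and $I$ the identity; its eigenvalues are $\tfrac{2M-2}{2(M-1)}=1$ (with eigenvector $\mathbf{1}$) and $\tfrac{M-2}{2(M-1)}$ (of multiplicity $M-1$), both nonzero when $M>2$, so $C$ is invertible and $\varphi$ is well-defined. I would include this brief verification as a remark to justify invoking $C^{-1}$.

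I do not anticipate a real obstacle here: the entire content of the lemma is already baked into the linear-algebraic definition of $\varphi$, so the proof is essentially definition chasing, with the invertibility of the specific $C$ being the only thing that requires a quick check.
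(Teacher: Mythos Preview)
Your proposal is correct and matches the paper's approach exactly: the paper derives $\varphi$ by writing the unbiasedness conditions as the linear system $\mathbf{s}(\mathbf{\hat{p}}) = C\cdot \varphi(\mathbf{\hat{p}})$ and inverting, so the lemma is immediate from $C\cdot C^{-1}\mathbf{s}=\mathbf{s}$, which is precisely your argument. Your additional eigenvalue check that the suggested $C$ is invertible is a welcome detail the paper asserts but does not verify.
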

We include a detailed example of $\varphi$ for three-outcome events below.
%

\paragraph{Example of $\varphi$ for three-outcome events}

\begin{example}
An example with $M=3$. Suppose we flip the outcome using the uniform-error confusion matrix:
\[ C = 
\begin{bmatrix}
    0.5      & 0.25 &  0.25 \\
    0.25       & 0.5  & 0.25 \\
    0.25       & 0.25  & 0.5 
\end{bmatrix} \Rightarrow C^{-1} = 
\begin{bmatrix}
   3     & -1 &  -1 \\
    -1       & 3  & -1 \\
    -1       & -1  & 3 
\end{bmatrix}
\]
Therefore we obtain a closed-form of $\varphi$:
\begin{align*}
\mathbf{\varphi}_0 (\mathbf{\hat{p}})&= 3  \mathbf{s}_0(\mathbf{\hat{p}}) - \mathbf{s}_1(\mathbf{\hat{p}}) - \mathbf{s}_2(\mathbf{\hat{p}}) \\ 
\mathbf{\varphi}_1(\mathbf{\hat{p}}) &= - \mathbf{s}_0(\mathbf{\hat{p}}) +3 \mathbf{s}_1(\mathbf{\hat{p}}) - \mathbf{s}_2(\mathbf{\hat{p}}) \\ 
\mathbf{\varphi}_2(\mathbf{\hat{p}}) &= -  \mathbf{s}_0(\mathbf{\hat{p}}) - \mathbf{s}_1(\mathbf{\hat{p}}) + 3\mathbf{s}_2(\mathbf{\hat{p}}) 
\end{align*}
\end{example}
\section{Evaluation}\label{sec:simulation}
In this section, we evaluate \textsf{LWS} and \textsf{RP-SWME} with extensive simulations. We first compare the efficiency of \textsf{LWS} and \textsf{RP-SWME} with that of other existing deterministic (weakly) incentive compatible mechanisms \textsf{WSWM}, \textsf{NAWM} and \textsf{DCA}. The results show that the two randomized wagering mechanisms outperform the three deterministic wagering mechanism. Then, we compare the variance of payoff and the probability of winning money within the two randomized wagering mechanisms. The results show that \textsf{RP-SWME} is better than \textsf{LWS} in these two matrices. 

\subsection{Simulation Setup}
We simulate both the binary events and the multi-outcome events.
For binary events, we generated six sets of agents' predictions and wagers according to the combinations of three different prediction models and two different wager models. With a little abuse of notation, we denote that an event happens with probability $q$ and that agent $i$ believes that the event to predict will happen with  probability $p_i$ and will not happen with probability $1-p_i$. We use three models to generate predictions $p_i,i\in\mathcal{N}$:
\begin{enumerate}
\item Uniform model: For each event, $p_i$ is independently drawn from a uniform distribution over $[0,1]$.
\item Logit-Normal model:  This model assumes that $p_i$, when being mapped to the real line by a logit function as $\log\left(\frac{p_i}{1-p_i}\right)$, is independently drawn from a Normal distribution $\mathcal{N}(\log(\frac{q}{1-q})^{1/\alpha}, \sigma^2)$, i.e., $p_i\sim \text{Logit-Normal}\left(\log(\frac{q}{1-q})^{1/\alpha}, \sigma^2\right)$.  $q, \alpha, \sigma^2$ are model parameters. This model is proposed and used to estimate the  happening probability of the event in~\cite{satopaa2014combining}, where $q$ is regarded as an estimator of the happening probability and $\alpha$ models the under-confident effect on human forecasters. Based on a real prediction dataset over 1300 forecasters and 69 geopolitical events collected in~\cite{satopaa2014combining}, this model outperforms most existing models to estimate the happening probability of events, which leads us to believe this model a good alternate to generate prediction data. In our simulations, we adopted $\alpha = 2$, which best fits the aforementioned real prediction dataset, $\sigma^2=1$, and $q$ is drawn uniformly from $[0,1]$ for each event. 

\item Synthetic model: this synthetic model is introduced from a set of simulation studies in \cite{ranjan2010combining,allard2012probability,satopaa2014combining}. 
 The model assumes that the happening probability of an event to be predicted by  $N$  is given by $q = \Phi(\sum_{i=1}^N u_i )$, where $\Phi$ is the cumulative distribution function of a standard normal distribution and $u_i$ is independently drawn from $\mathcal{N}(0,1)$. Each agent knows the true probability generating model and $u_i$ but not $u_j, \forall j\ne i$. Accordingly, each agent's calibrated belief of the happening probability of the event is given by $p_i = \Phi(\frac{u_i}{\sqrt{2N-1}})$.
\end{enumerate}

We use two models to generate the wagers of agents: 
 \begin{enumerate}
\item Uniform model: All agents' wagers are equal to 1.
\item Pareto model:  This model assumes that the wager $w_i$ of agent $i$ follows  the Pareto distribution, which is often adopted to model the distribution of wealth in a population. In the simulations of \cite{freeman2017double}, the authors selected the shape parameter and scale parameter of the Pareto distribution as 1.16 and 1 correspondingly, which is the distribution depicted as  ``20\% of the population has 80\% of the wealth''. We adopted the same parameters for comparison purpose. 
\end{enumerate}

For events with multiple outcomes, we simulated three sets of data with the number of possible outcomes 3, 6, 9 correspondingly. In each set, we drew the predictions from uniform distribution over the whole probability space and drew the wagers according to the Uniform model.

\subsection{Comparison of efficiency of wagering mechanisms}

\begin{figure}[t]
\centering
\begin{subfigure}{0.32\textwidth}
\includegraphics[width=\textwidth]{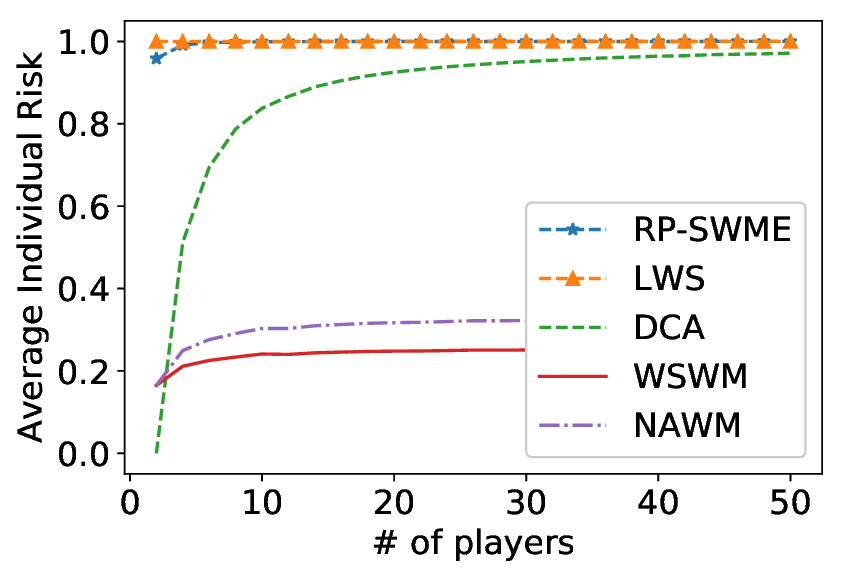}
\caption{\label{risk_u_u}$\mathbf{p}\sim$ Uniform, $\mathbf{w}\sim$ Uniform}
\end{subfigure}
\begin{subfigure}{0.32\textwidth}
\includegraphics[width=\textwidth]{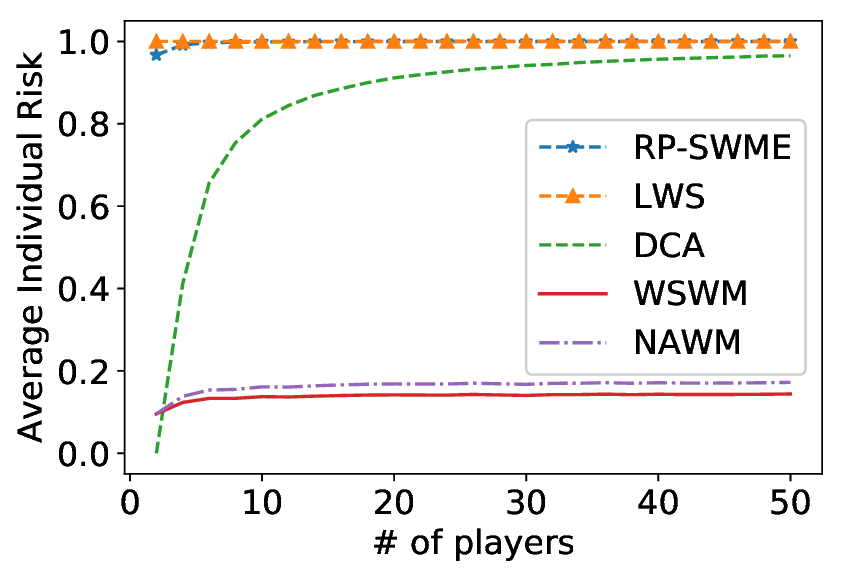}
\caption{\label{risk_u_l}$\mathbf{p}\sim$ Logit, $\mathbf{w}\sim$ Uniform}
\end{subfigure}
\begin{subfigure}{0.32\textwidth}
\includegraphics[width=\textwidth]{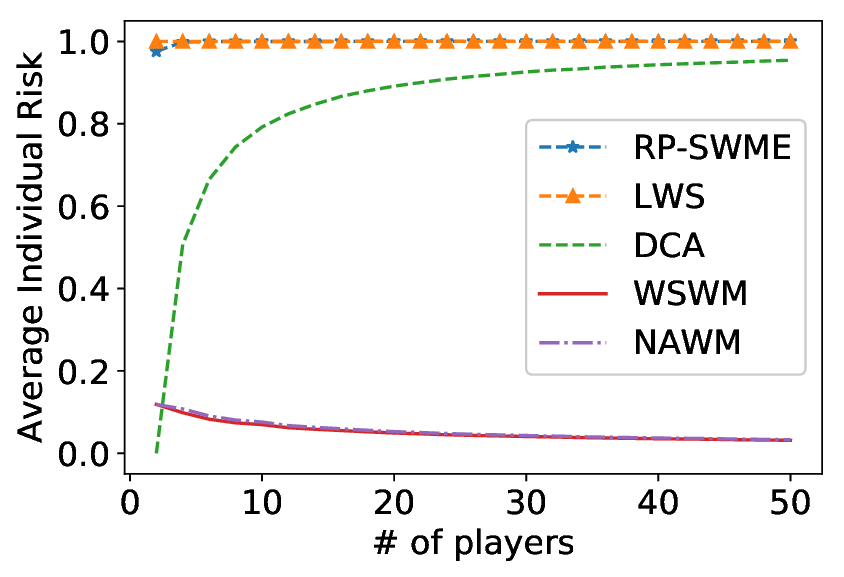}
\caption{\label{risk_u_s}$\mathbf{p}\sim$ Synthetic, $\mathbf{w}\sim$ Uniform}
\end{subfigure}
\begin{subfigure}{0.32\textwidth}
\includegraphics[width=\textwidth]{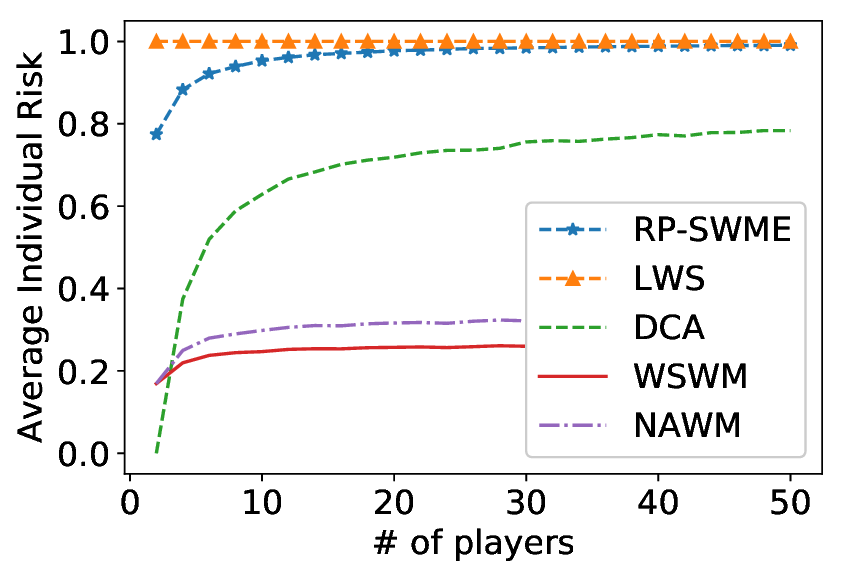}
\caption{$\mathbf{p}\sim$ Uniform, $\mathbf{w}\sim$ Pareto}
\label{risk_p_u}
\end{subfigure}
\begin{subfigure}{0.32\textwidth}
\includegraphics[width=\textwidth]{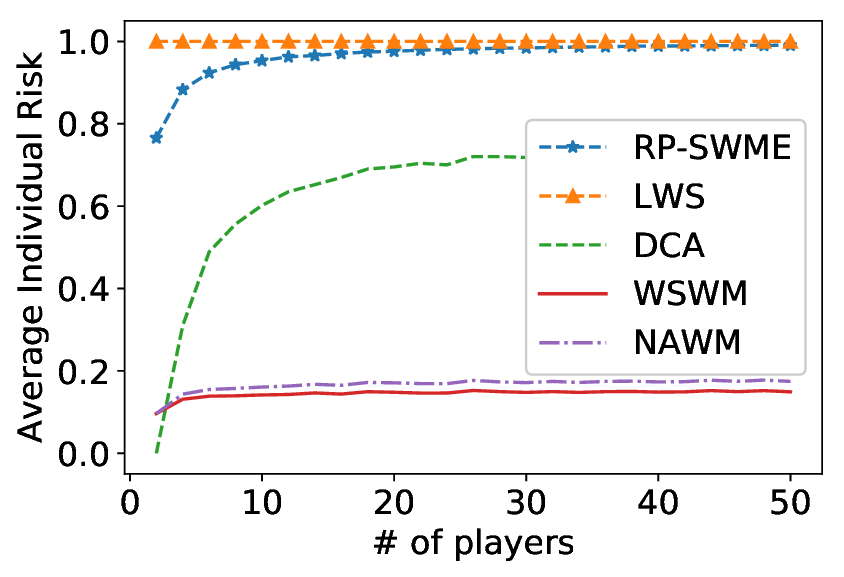}
\caption{$\mathbf{p}\sim$ Logit, $\mathbf{w}\sim$ Pareto}
\label{risk_p_l}
\end{subfigure}
\begin{subfigure}{0.32\textwidth}
\includegraphics[width=\textwidth]{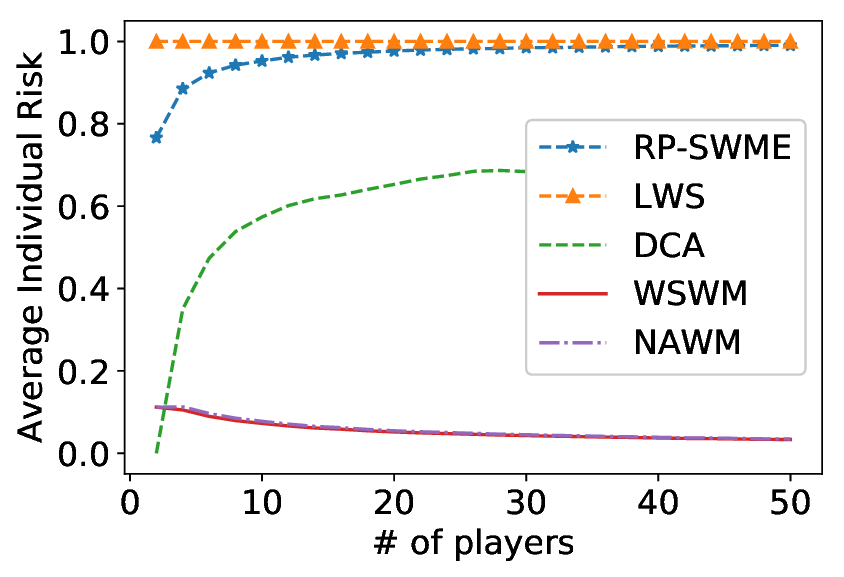}
\caption{$\mathbf{p}\sim$ Synthetic, $\mathbf{w}\sim$ Pareto}
\label{risk_p_s}
\end{subfigure}
\caption{\label{risk_all}Average individual risk of each of five wagering mechanisms as a function of $N$ under different prediction and wager models}
\end{figure}
\begin{figure}[t]
\centering
\begin{subfigure}{0.32\textwidth}
\includegraphics[width=\textwidth]{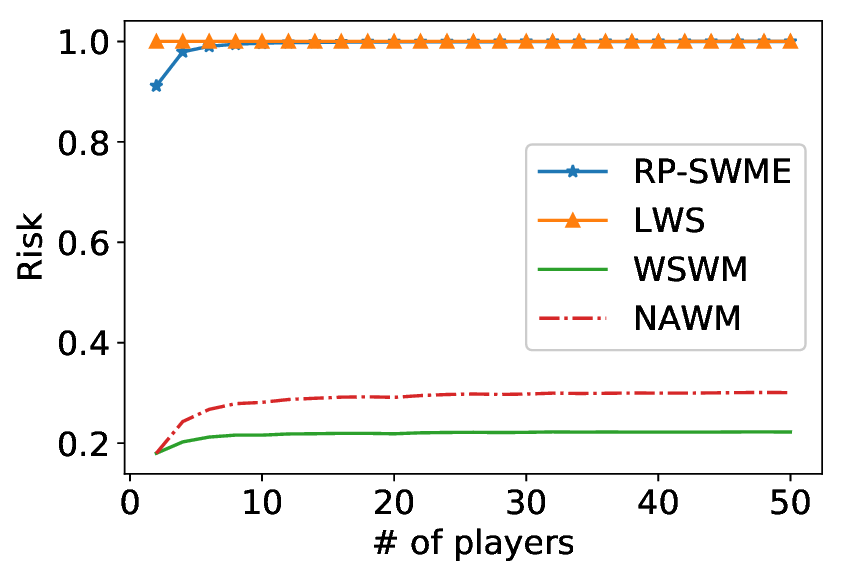}
\caption{\label{risk_multi_3}Events with 3 outcomes}
\end{subfigure}
\begin{subfigure}{0.32\textwidth}
\includegraphics[width=\textwidth]{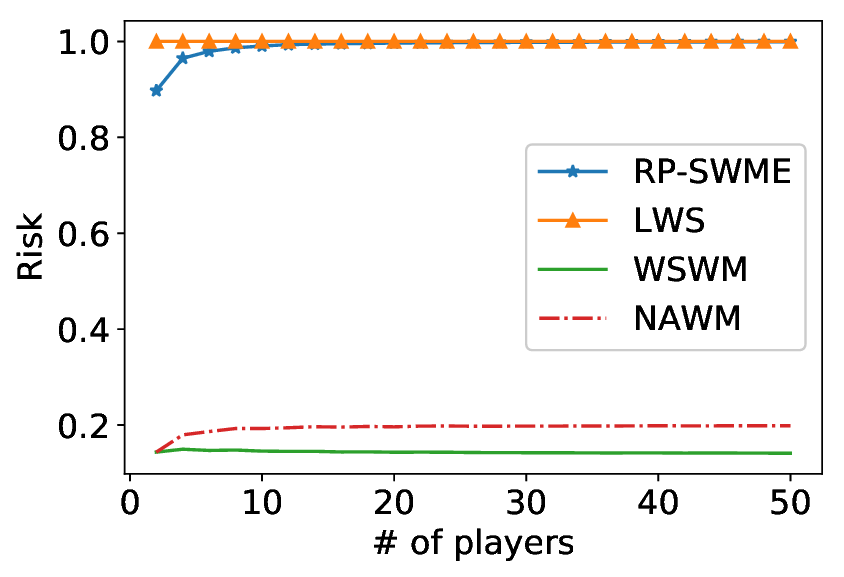}
\caption{\label{risk_multi_6}Events with 6 outcomes}
\end{subfigure}
\begin{subfigure}{0.32\textwidth}
\includegraphics[width=\textwidth]{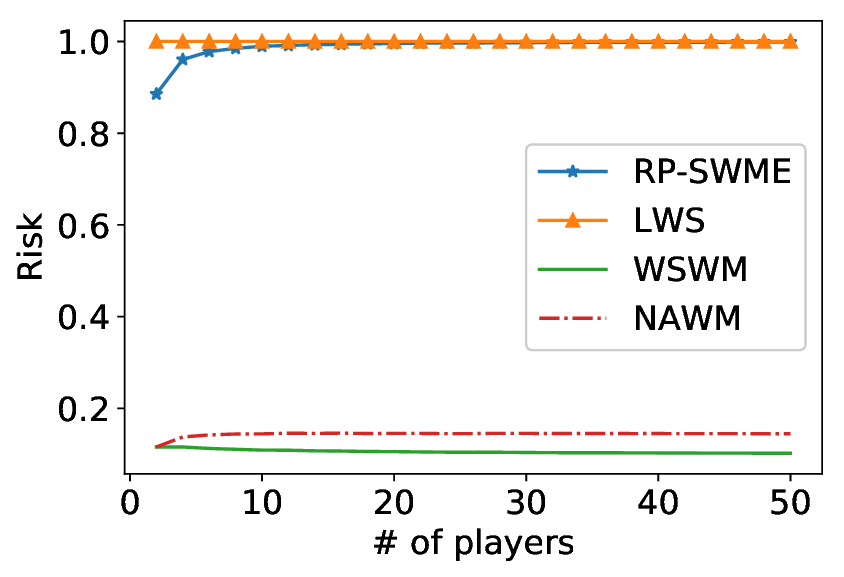}
\caption{\label{risk_multi_9}Events with 9 outcomes}
\end{subfigure}
\caption{\label{risk_multi_all}Average individual risk of each of four mechanisms under events with multiple outcomes}
\end{figure}
We show that \textsf{LWS} and  \textsf{RP-SWME} are more efficient than existing deterministic (weakly) incentive compatible mechanisms \textsf{WSWM}, \textsf{NAWM} and \textsf{DCA}. We evaluate the efficiency by two metrics: \emph{Average individual risk} and \emph{Average money exchange rate}.

\emph{Individual risk} is the percent of wager that an individual agent can lose in the worst case w.r.t. the event outcome and the randomness of the mechanisms. The average individual risk is an indicator of Pareto optimality, because the average individual risk equal to 1 (i.e., no one can commit to secure a positive wager before the wagering game) is a sufficient condition of Pareto optimality.
\emph{Money exchange rate} is the total amount of money exchanged  in the game after the outcome of a wagering mechanism is realized, divided by the total amount of wagers. Average money exchange rate measures the efficiency of an average wagering game.

In our simulations, we vary the number of agents for 2 to 50 with a step of 2. For each number of agents, we randomly generate 1000 events and the agents' predictions and wagers for each of the six combinations of prediction models and wager models, and take the average of individual risk  and money exchange rate over the 1000 events. When calculating the money exchange, we use the expectation of the money exchange over all possible outcomes according to the happening probability of each outcome. This happening probability is either specified in the model generating the predictions, or otherwise, drawn from a uniform distribution over the corresponding probability space. 

In the simulations, both \textsf{RP-SWME} and \textsf{LWS} achieve the highest average individual risk (approximately 1)  under all conditions (\# of outcomes, \# of agents, prediction models, and wager models) we simulated (Figure~\ref{risk_all}, \ref{risk_multi_all}). 
In contrast, the best of the deterministic mechanisms \textsf{DCA}, only achieves an approximate 1 average individual risk  when the wagers of agents are uniform and the number of participants is more than 30 (Figure~\ref{risk_u_u}-\ref{risk_u_s}). Its average individual risk drops to 0.6 when the wagers of agents follows the Pareto distribution (Figure~\ref{risk_p_u}-\ref{risk_p_s}). This result shows that the two randomized mechanisms effectively remove the opportunity for side bet and take use of all the wagers before the outcome is realized.

\begin{figure}[t]
\centering
\begin{subfigure}{0.32\textwidth}
\includegraphics[width=\textwidth]{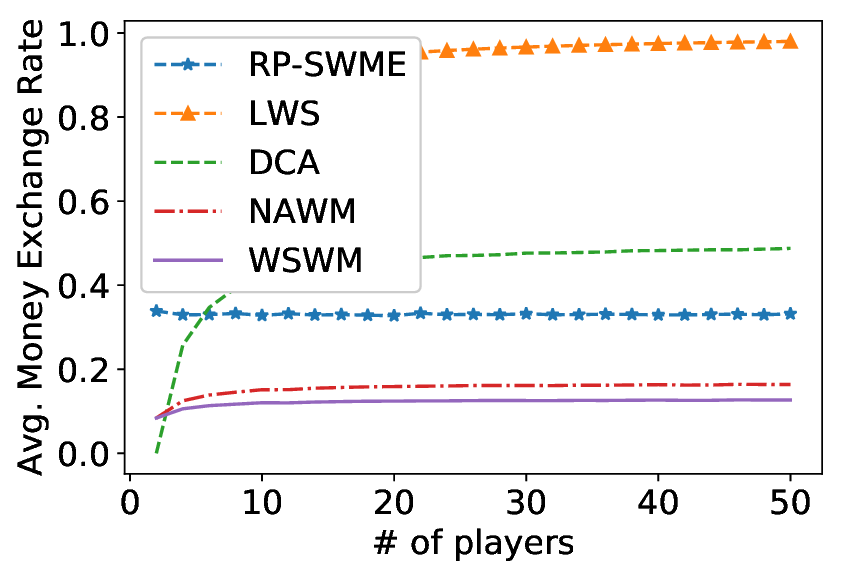}
\caption{$\mathbf{p}\sim$ Uniform, $\mathbf{w}\sim$ Uniform}
\label{me_u_u}
\end{subfigure}
\begin{subfigure}{0.32\textwidth}
\includegraphics[width=\textwidth]{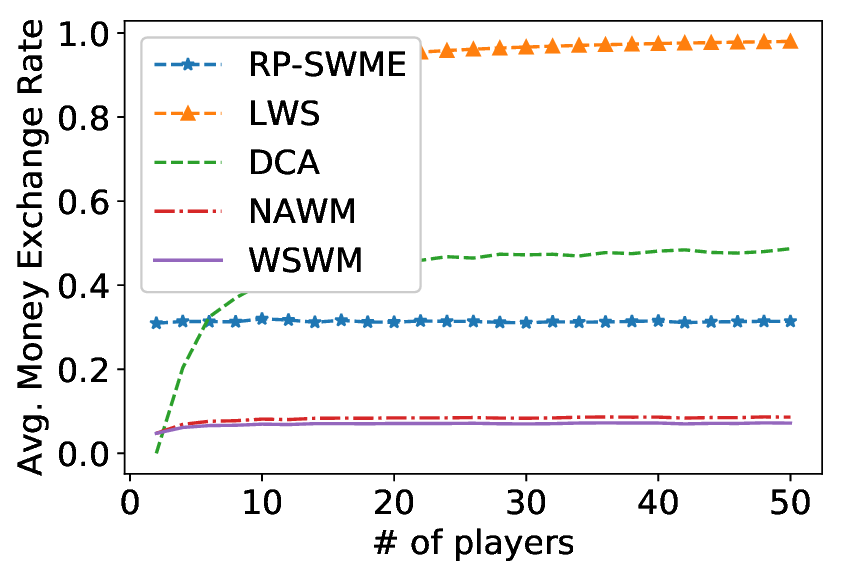}
\caption{$\mathbf{p}\sim$ Logit, $\mathbf{w}\sim$ Uniform}
\label{me_u_l}
\end{subfigure}
\begin{subfigure}{0.32\textwidth}
\includegraphics[width=\textwidth]{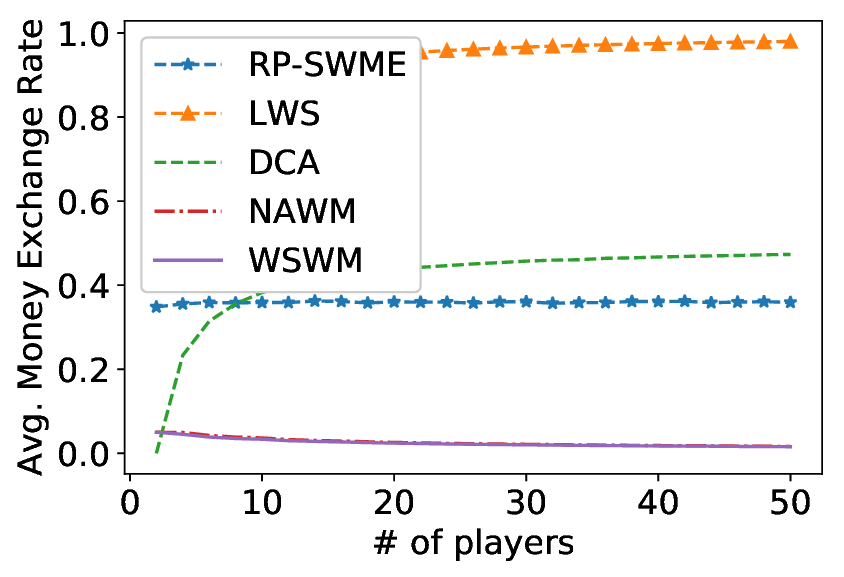}
\caption{$\mathbf{p}\sim$ Synthetic, $\mathbf{w}\sim$ Uniform}
\label{me_u_s}
\end{subfigure}
\begin{subfigure}{0.32\textwidth}
\includegraphics[width=\textwidth]{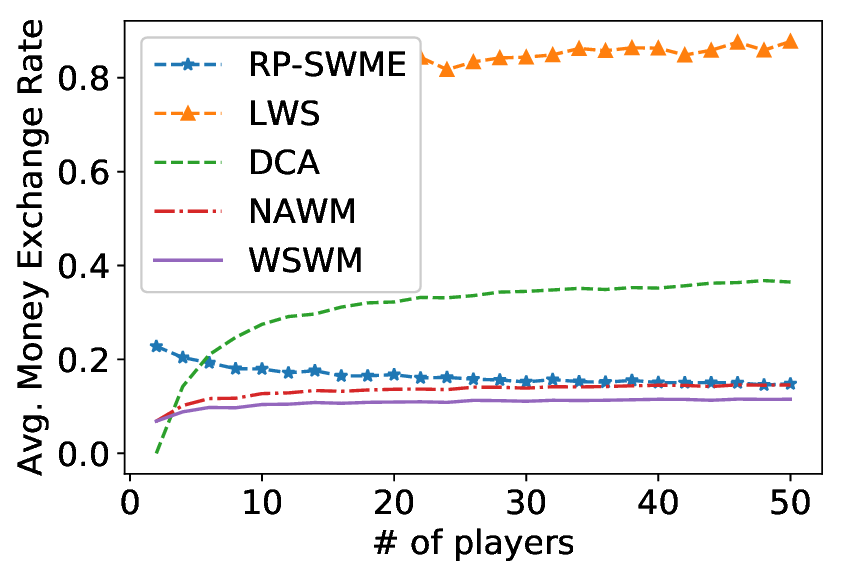}
\caption{$\mathbf{p}\sim$ Uniform, $\mathbf{w}\sim$ Pareto}
\label{me_p_u}
\end{subfigure}
\begin{subfigure}{0.32\textwidth}
\includegraphics[width=\textwidth]{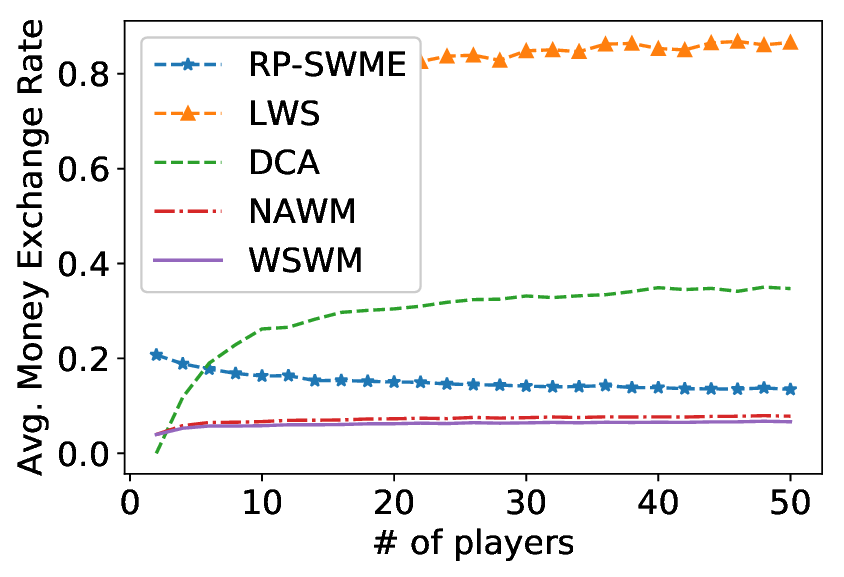}
\caption{$\mathbf{p}\sim$ Logit, $\mathbf{w}\sim$ Pareto}
\label{me_p_l}
\end{subfigure}
\begin{subfigure}{0.32\textwidth}
\includegraphics[width=\textwidth]{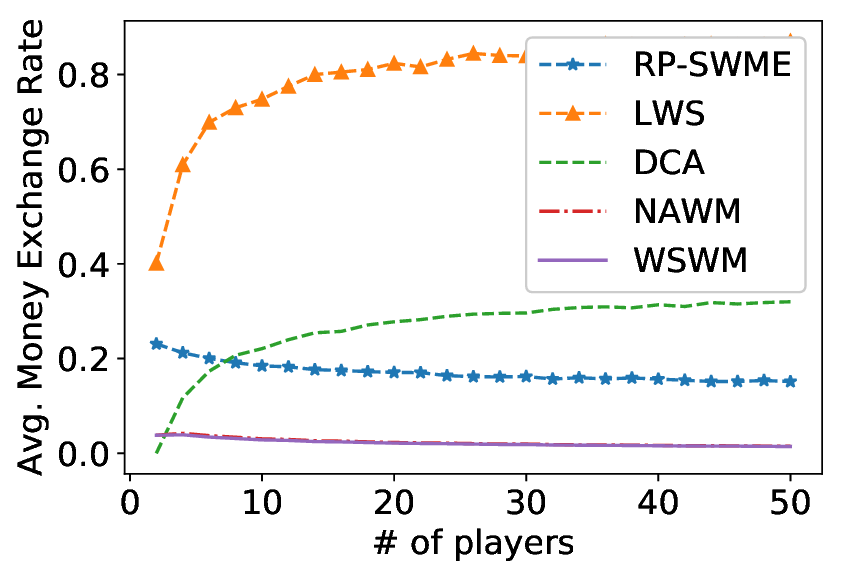}
\caption{$\mathbf{p}\sim$ Synthetic, $\mathbf{w}\sim$ Pareto}
\label{me_p_s}
\end{subfigure}
\caption{\label{me_all}Average money exchange rate of each of five wagering mechanisms as a function of $N$ under different prediction and wager models}
\end{figure}
\begin{figure}[t]
\begin{subfigure}{0.32\textwidth}
\includegraphics[width=\textwidth]{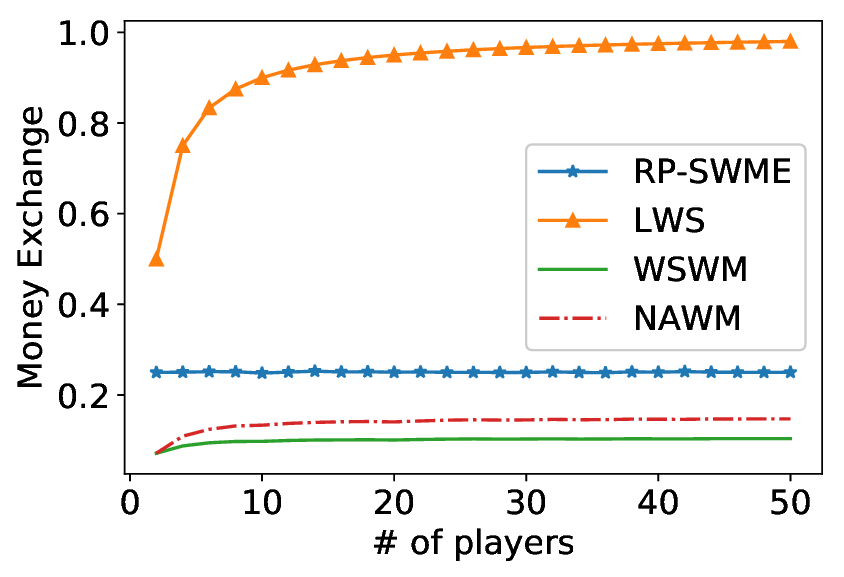}
\caption{\label{me_multi_3}Events with 3 outcomes}
\end{subfigure}
\begin{subfigure}{0.32\textwidth}
\includegraphics[width=\textwidth]{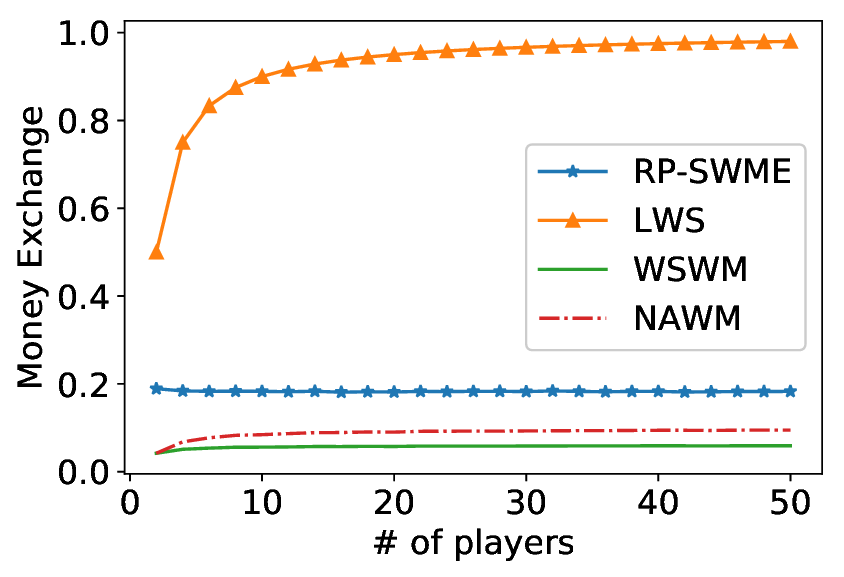}
\caption{\label{me_multi_6}Events with 6 outcomes}
\end{subfigure}
\begin{subfigure}{0.32\textwidth}
\includegraphics[width=\textwidth]{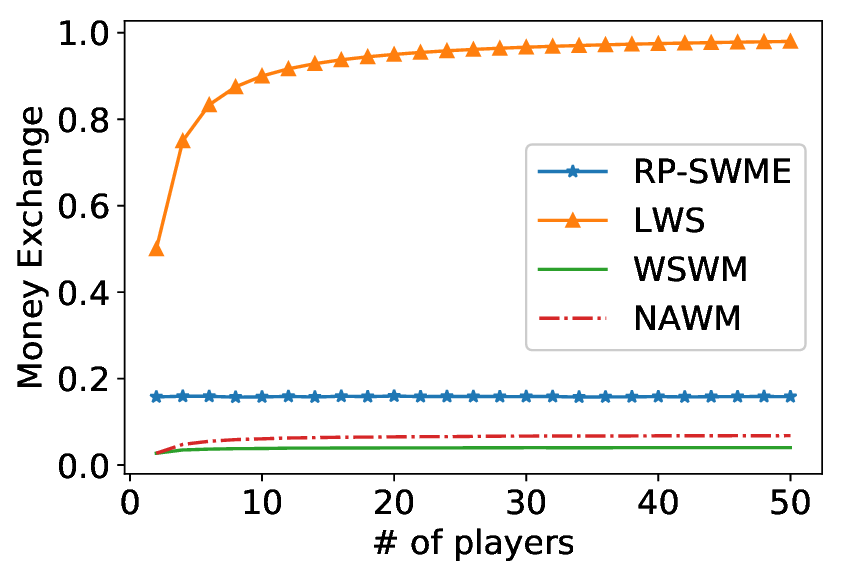}
\caption{\label{me_multi_9}Events with 9 outcomes}
\end{subfigure}
\caption{\label{me_multi_all}Average money exchange rate of each of four mechanisms under events with multiple outcomes}
\end{figure}

\textsf{LWS}  doubles the money exchange rate of the second best alternative, hitting a more than 80\% money exchange rate under all conditions we simulated (Figure~\ref{me_all}, \ref{me_multi_all}). On the other hand, \textsf{RP-SWME} also defeats the other two incentive compatible deterministic wagering mechanisms in expected money exchange under all conditions we simulated  (Figure~\ref{me_all}, \ref{me_multi_all}). Meanwhile, it also outperforms \textsf{DCA} when the number of agents is small  (Figure~\ref{me_all}).

In particular, when the prediction follows the synthetic model, where the predictions are much closer to each other as  the number of participants increases,  the money exchange rate of the two incentive compatible deterministic wagering mechanisms,  \textsf{WSWM} and  \textsf{NAWM}  converge to zero. However, the two randomized wagering mechanisms still keep a large money exchange rate (Figure~\ref{me_u_s}, \ref{me_p_s}).

\subsection{Comparison of randomness properties of  \textsf{RP-SWME} and  \textsf{LWS}}
\label{sim_var_pr}
In this section, we compare the \emph{standard variance} of payoffs and \emph{the probability of not losing money} of \textsf{RP-SWME} and \textsf{LWS}. We evaluated these two metrics w.r.t. to the prediction accuracy, which is measured based on the distance of a prediction to the outcome, i.e., $\text{Accuracy} = 1 - |x-p_i|$\footnote{We use it as measurement of accuracy for two reasons: i. it is linear in prediction $p_i$, ii. it has an inject to Brier Score}.

In the evaluation, we run 10000 wagering instances under these two mechanisms and recorded the prediction accuracy of each agent in each instance and the corresponding net-payoff. Then, we group these agents into 10 groups that correspond to 10 consecutive accuracy intervals. In each group, we calculate the standard variance and the percent of agents winning money. For fair comparison, we normalize the net-payoff of each agent by its own wager.

We simulate binary events. We generate two set of simulated data. In both sets, we varied the number of agents from 2 to 50 with a set of 2, and under each number, we generated 10000 instances. In each instance, the agents' predictions are drawn from the Uniform model, while the wagers are drawn from the Uniform model in one set and drawn from the Pareto model in the other set. 

Our results show that under all conditions we simulate, \textsf{RP-SWME} has a much smaller variance in agents' net-payoff and the variance is steady across agents with different prediction accuracy. In contrast, the \textsf{LWS} has a much larger variance in net-payoff, which increases with the prediction accuracy (Figure~\ref{std_all}). On the other hand,  \textsf{RP-SWME} has a much larger probability of not losing money and this probability increases with the prediction accuracy, while \textsf{LWS} has a much smaller such probability (Figure~\ref{pr_all}). In brief, while both \textsf{RP-SWME} and \textsf{LWS} can effectively improve the efficiency of wagering, \textsf{RP-SWME} provides much less uncertainty than \textsf{LWS} does and thus, may be regarded as a more attractive alternative for deterministic wagering mechanisms. 

\begin{figure}[!t]
\begin{minipage}{0.50\textwidth}
\begin{subfigure}{0.46\textwidth}
\includegraphics[width=\textwidth]{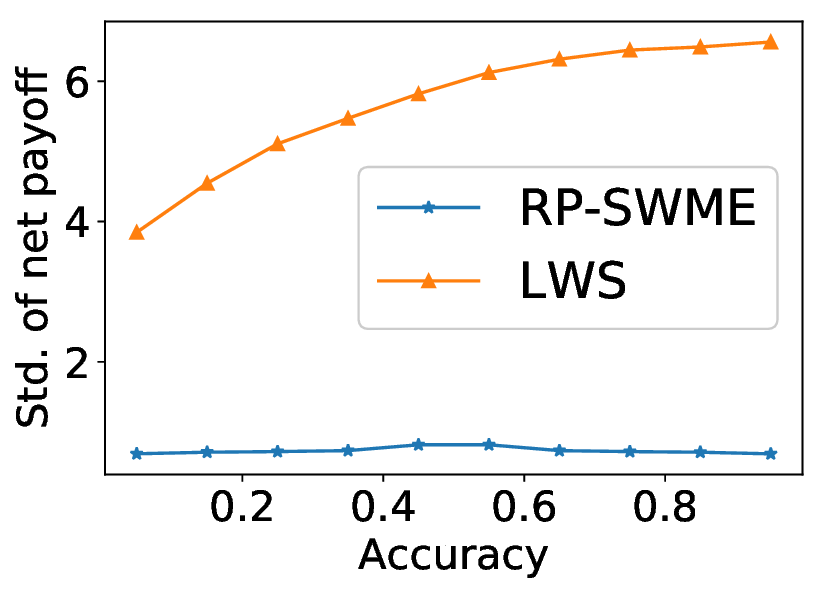}
\caption{\label{std_o_uniform}
$\mathbf{w}\sim$ Uniform}
\end{subfigure}
\begin{subfigure}{0.48\textwidth}
\includegraphics[width=\textwidth]{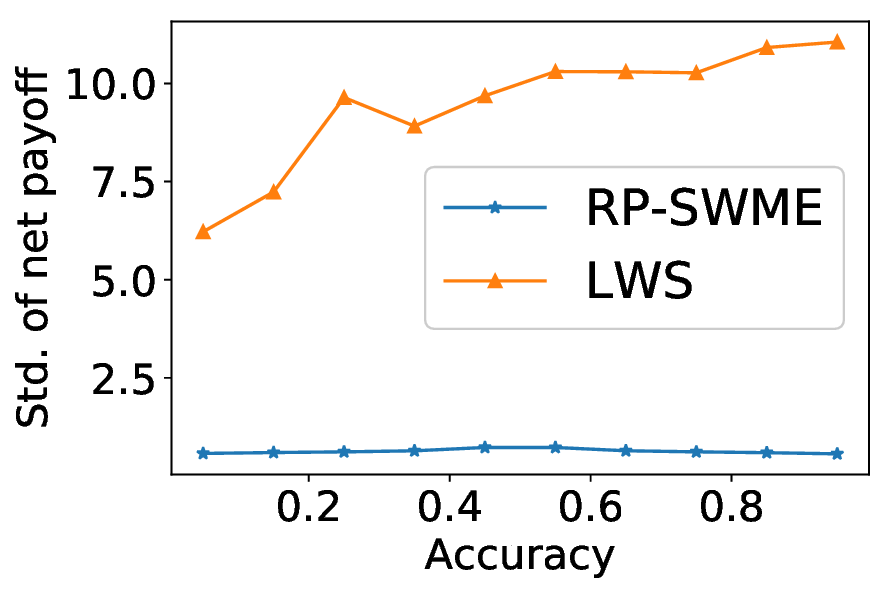}
\caption{\label{std_o_pareto}
$\mathbf{w}\sim$ Pareto}
\end{subfigure}
\caption{\label{std_all}Std. variance of net-payoff as a function of prediction accuracy: \textsf{RP-SWME} v.s. \textsf{LWS}}
\end{minipage}
\begin{minipage}{0.50\textwidth}
\begin{subfigure}{0.46\textwidth}
\includegraphics[width=\textwidth]{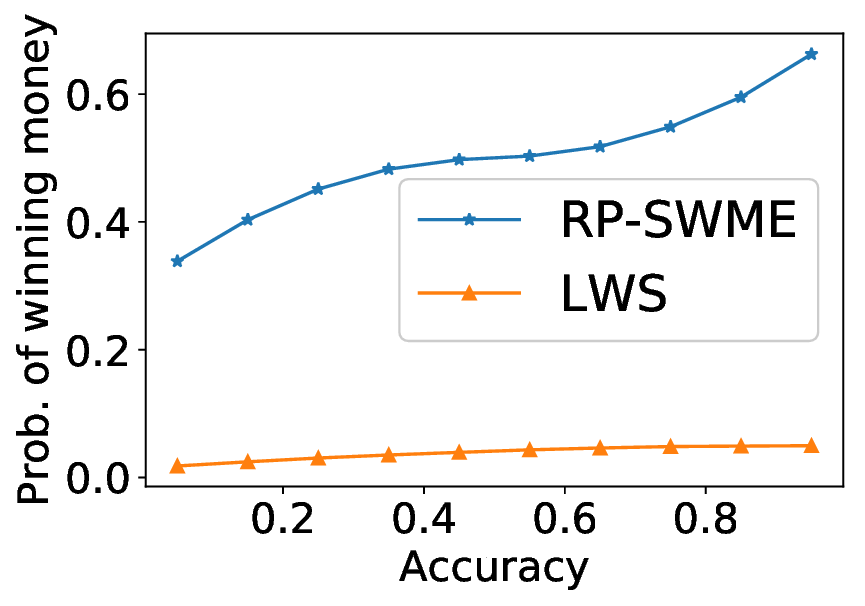}
\caption{\label{pr_o_uniform}
$\mathbf{w}\sim$ Uniform}
\end{subfigure}
\begin{subfigure}{0.48\textwidth}
\includegraphics[width=\textwidth]{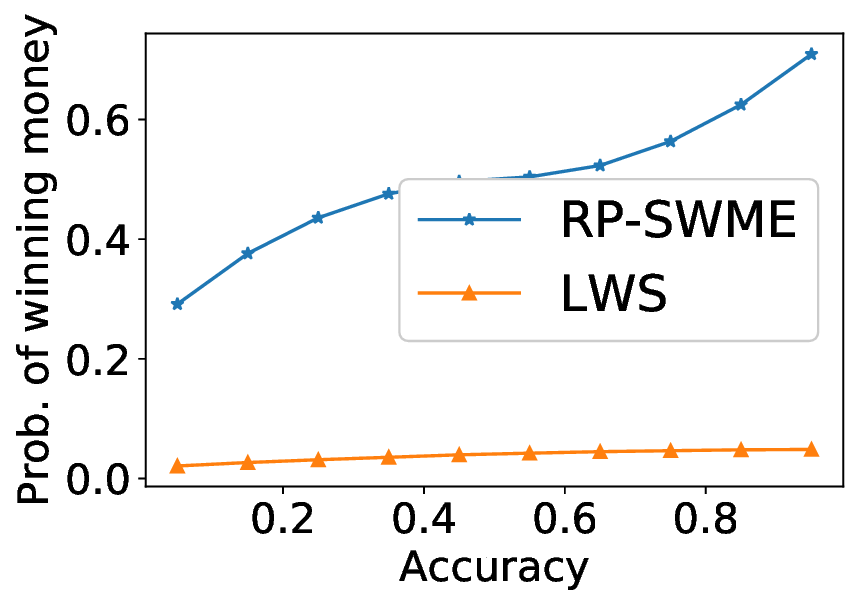}
\caption{\label{pr_o_pareto}
$\mathbf{w}\sim$ Pareto}
\end{subfigure}
\caption{\label{pr_all}Probability of winning money as a function of prediction accuracy: \textsf{RP-SWME} v.s. \textsf{LWS}}
\end{minipage}
\end{figure}

\section{Conclusion}\label{sec:conclude}
We extend the design of wagering mechanism to its randomized space. We propose two of them: Lottery Wagering Mechanisms (\textsf{LWM}) and Surrogate Wagering Mechanisms (\textsf{SWM}). We demonstrate the power of randomness by theoretically proving that they both satisfy a set of desirable properties, including Pareto efficiency which is missing in exiting wagering literature. We also carried out extensive experiments to support our theoretical findings. \textsf{SWM} is also robust to noisy outcomes. In particular, as shown by simulations, surrogate wagering mechanisms have reasonably small standard variance in agents' payoff and low probability for agents to lose all their wagers.

\bibliography{myref,library}


\end{document}